\newtheorem{theorem}{Theorem}
\newtheorem{lemma}[theorem]{Lemma}
\newtheorem{observation}[theorem]{Observation}
\newtheorem{proof}[theorem]{Proof}
\newcommand{\junk}[1]{}
\newcommand{\eat}[1] {}
\newcommand{\qed} {\hfill$\Box$}
\newcommand{\opt}{{\mathrm{OPT}}}
\newcommand{\J}{{\cal J}}
\newcommand{\I}{{\cal I}}
\renewcommand{\O}{{\cal O}}
\newcommand{\ResAll} {{\sc ResAll}}
\newcommand{\ZeroOneResAll} {{\sc (0-1)-ResAll}}
\newcommand{\PResAll}{{\sc PartialResAll}}
\newcommand{\PCResAll}{{\sc PrizeCollectingResAll}}
\newcommand{\lspc}{{\sc LSPC}}
\newcommand{\smfc}{{\sc SMFC}}
\newcommand{\calI} {{\cal R}}
\newcommand{\cT} {{\cal T}}
\newcommand{\cM}{{\cal M}}
\newcommand{\cC}{{\cal C}}
\newcommand{\cJ}{{\cal J}}
\newcommand{\lptr}{{\em l-ptr}}
\newcommand{\rptr}{{\em r-ptr}}
\newcommand{\ljob}{{\em l-job}}
\newcommand{\rjob}{{\em r-job}}
\newcommand{\comment}[1]{}
\newcommand{\sh} {{\rm sh}}
\newcommand{\calL} {{\cal L}}
\newcommand{\calT} {{\cal T}}
\newcommand{\calM} {{\cal M}}
\newcommand{\cA} {{\cal A}}
\newcommand{\calA} {\cA}
\newcommand{\cB} {{\cal B}}
\newcommand{\DP} {{\rm DP}}
\newcommand{\ceil}[1] {\lceil #1 \rceil}
\newcommand{\wh}[1] {\widehat{#1}}
\title{Scheduling Resources for Executing a Partial Set of Jobs}
\author[1]{Venkatesan T. Chakaravarthy}
\author[2]{Arindam Pal}
\author[1]{Sambuddha Roy}
\author[1]{Yogish Sabharwal}
\affil[1]{IBM Research Lab, New Delhi, India\\
  \texttt{\{vechakra,sambuddha,ysabharwal\}@in.ibm.com}
}
\affil[2]{Indian Institute of Technology, New Delhi.\\
\texttt{arindamp@cse.iitd.ernet.in}
}
\begin{document}

\maketitle

\begin{abstract}
In this paper, we consider the problem of choosing a minimum cost set of resources for executing
 a specified set of jobs. 
 Each input job is an interval, determined by its start-time and end-time. Each resource is also an interval determined by its start-time and end-time; moreover, every resource has a capacity and a cost associated with it. 
We consider two versions of this problem. 

In the partial covering version, 
we are also given as input a number $k$, specifying the number of jobs that must be performed. The goal is to choose $k$ jobs and find a minimum cost set of resources to perform the chosen $k$ jobs (at any point of time the capacity of the chosen set of resources should be sufficient to execute the jobs active at that time). We present an $O(\log n)$-factor approximation algorithm for this problem. 

We also consider the prize collecting version, wherein every job also has a penalty 
associated with it. The feasible solution consists of a subset of the jobs, and a set of resources, 
to perform the chosen subset of jobs. The goal is to find a feasible solution that minimizes the 
sum of the costs of the selected resources and the penalties of the jobs that are not selected. 
We present a constant factor approximation algorithm for this problem.
\end{abstract}

\section{Introduction}
\label{sec:intro}
We consider the problem of allocating resources to schedule jobs. 
Each job is specified by its start-time, end-time and its demand requirement.
Each resource is specified by its start-time, end-time, the capacity it offers and its associated cost. 
A feasible solution is a set of resources satisfying the constraint that at any timeslot, 
the sum of the capacities offered by the resources is at least the demand required by
the jobs active at that timeslot, i.e., the selected resources must cover the jobs.
The cost of a feasible solution is the sum of costs of the resources picked in the solution.
The goal is to pick a feasible solution having minimum cost.
We call this the Resource Allocation problem ({\ResAll}).

The above problem is motivated by applications in cloud and grid computing.
Consider jobs that require a common resource such as network bandwidth or storage.
The resource may be available under different plans; for instance, it is common for network bandwidth to be priced
based on the time of the day to account for the network usage patterns during the day. 
The plans may offer different capacities of the resource at different costs.
Moreover, It may be possible to lease multiple units of the resource under some plan by paying a cost
proportional to the number of units.

Bar-Noy et al. \cite{Bar-Noy} presented a $4$-approximation algorithm for the {\ResAll} problem
(See Section 4 therein).
We consider two variants of the problem.
The first variant is the partial covering version. In this problem, the input also specifies a number $k$ and
a feasible solution is only required to cover $k$ of the jobs.
The second variant is the prize collecting version wherein each job has a penalty associated with it;
for every job that is not covered by the solution, the solution incurs an additional cost, 
equivalent to the penalty corresponding to the job.
These variants are motivated by the concept of service level agreements (SLA's), 
which stipulate that a large fraction of the client's jobs are to be completed. 
We study these variants for the case where the demands of all the jobs are uniform (say $1$ unit)
and a solution is allowed to pick multiple copies of a resource by paying proportional cost.
We now define our problems formally.

\subsection{Problem Definition}
We consider the timeline $\cT$ to be uniformly divided into discrete intervals ranging from $1$ to $T$.
We refer to each integer $1\leq t\leq T$ as a {\it timeslot}.
The input consists of a set of  {\em jobs} ${\cal J}$, and a set of {\em resources} $\calI$.

Each job $j \in {\cal J}$ is specified by an interval $I(j) = [s(j), e(j)]$, where $s(j)$ and $e(j)$ are the {\em start-time} and {\em end-time}
of the job $j$. We further assume that $s(j)$ and $e(j)$ are integers in the range $[1, T]$ for every job $j$.
While the various jobs may have different intervals associated with them, we consider all the jobs to have
{\em uniform} demand requirement, say $1$ unit. 

Further, each resource $i \in \calI$ is specified
by an interval $I(i)=[s(i),e(i)]$, where $s(i)$ and $e(i)$ are the {\em start-time} and the {\em end-time}
of the resource $i$; we assume that $s(i)$ and $e(i)$ are integers in the range $[1,T]$. 
The resource $i$ is also associated with a {\em capacity} $w(i)$ and a cost $c(i)$; we
assume that $w(i)$ is an integer. 
We interchangeably refer to the resources as {\em resource intervals}.
A typical scenario of such a collection of jobs and resources is shown in Figure~\ref{fig:cc}.

\begin{figure*}[t]
\begin{center}
\fbox{
\includegraphics[scale=0.25]{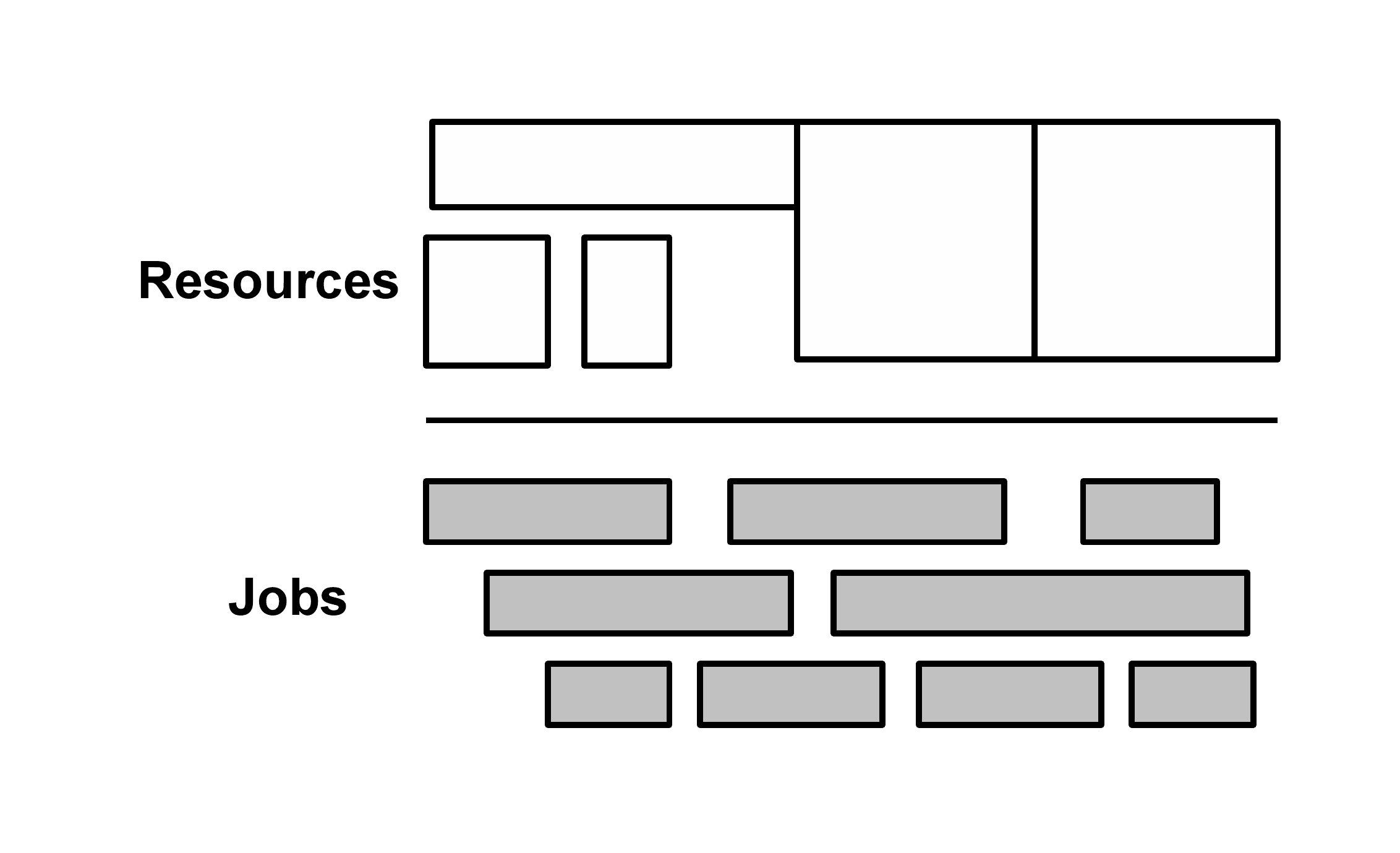}
}
\end{center}
\caption{Illustration of the input}
\label{fig:cc}
\end{figure*}

We say that a job $j$ (resource $i$) is {\it active} at a timeslot $t$, 
if $t \in I(j)$ ($I(i)$); we  denote this as $j \sim t$ ($i \sim t$).
In this case, we also say that $j$ (or $i$) {\em spans} $t$.


We define a {\em profile} $P: \cT \rightarrow \mathbb{N}$ to be a mapping that assigns an integer value 
to every timeslot. For two profiles, $P_1$ and $P_2$, $P_1$ is said to {\em cover} $P_2$, 
if $P_1(t) \geq P_2(t)$ for all $t \in \cT$.
Given a set $J$ of jobs, the profile $P_J(\cdot)$ of $J$ is defined to be the mapping determined by 
the cumulative demand of the jobs in $J$, i.e. $P_J(t) = |\{ j \in J~:~j \sim t \}|$.
Similarly, given a multiset $R$ of resources, its profile is: $P_R(t) = \sum_{i \in R~:~i \sim t} w(i)$
(taking copies of a resource into account). 
We say that $R$ {\em covers} $J$ if $P_R$ covers $P_J$. 
The cost of a multiset of resources $R$ is defined to be the sum of the costs of all the resources 
(taking copies into account).

We now describe the two versions of the problem.
\begin{itemize}
\item  {\PResAll}: In this problem, the input also specifies a number $k$ (called the {\em partiality parameter})
       that indicates the 
	number of jobs to be covered. A feasible solution is a pair $(R,J)$ where $R$ is a multiset of resources
	and $J$ is a set of jobs such that $R$ covers $J$ and $|J| \ge k$.
	The problem is to find a feasible solution of minimum cost.
\item  {\PCResAll}: In this problem, every job $j$ also has a penalty $p_j$ associated with it.
	A feasible solution is a pair $(R,J)$ where $R$ is a multiset of resources
	and $J$ is a set of jobs such that $R$ covers $J$.
	The cost of the solution is the sum of the 
	costs of the resources in $R$ and the penalties of the jobs not in $J$.
	The problem is to find a feasible solution of minimum cost.
\end{itemize}
Note that in both the versions, multiple copies of the same resource can be picked
by paying the corresponding cost as many times.


\subsection{Related Work and Our Results}
Our work belongs to the space of {\em partial} covering problems, which are a 
natural variant of the corresponding full cover problems. There is a significant body of work 
that consider such problems in the literature, for instance, see \cite{Garg05,Bar01,JV01,KPS11,GKS04}.

In the setting where resources and jobs are embodied as intervals, the objective of finding a minimum cost collection of
resources that fulfill the jobs is typically called the {\em full cover} problem. Full cover problems in the interval context have 
been dealt with earlier, in various earlier works \cite{Bar-Noy,bhatia07,cgk10}. Partial cover 
problems in the interval context have been considered earlier in \cite{esa2011}.
\begin{quote}
{\bf Our Main Result}.
We present an $O(\log (n+m))$ approximation for the {\PResAll} problem, where
$n$ is the number of jobs and $m$ is the number of resources respectively.
\end{quote}
The work in existing literature that is closest in spirit to our result is that of
Bar-Noy et al.\cite{Bar-Noy}, and Chakaravarthy et al.\cite{esa2011}.
In \cite{Bar-Noy}, the authors consider the full cover version, and present 
a $4$-approximation algorithm. In this case, all the jobs are to be covered, 
and therefore the demand profile to be covered is fixed. The goal is to find the 
minimum cost set of resources, for covering this profile. 
In our setting, 
we need to cover only $k$ of the jobs. 
A solution needs to select $k$ jobs to be covered in such a manner 
that  the resources required to cover the resulting demand profile has minimum cost.  

In \cite{esa2011}, the authors consider a scenario, wherein the 
timeslots have demands and a solution must satisfy the demand for at least $k$ of the timeslots. 
In contrast, in our setting, a solution needs to satisfy $k$ {\em jobs}, wherein 
each job can span multiple timeslots. 
A job may not be completely spanned by any resource, and thus may require
{\em multiple} resource intervals for covering it.  

We also show a constant factor approximation algorithm for the {\PCResAll} problem, by reducing it 
to the zero-one version of the {\ResAll} problem. 
Jain and Vazirani \cite{JV01} provide a general framework for achieving approximation algorithms for partial 
covering problems, wherein the prize collecting version is considered. In this framework, under suitable conditions, 
a constant factor approximation for the prize collecting version implies a constant factor approximation
for the partial version as well. However, their result applies only when the prize collecting algorithm has a
 certain strong property, called the {\em Lagrangian Multiplier Preserving} (LMP) property. 
While we are able to achieve a constant factor approximation for the {\PCResAll} problem, 
our algorithm does not have the LMP property. Thus, the Jain-Vazirani framework does not apply 
to our scenario. 

\section{Outline of the Main Algorithm}
\label{sec:overview}
In this section, we outline the proof of our main result:

\begin{theorem}
\label{thm:xAAA}
There exists an $O(log (n+m))$-approximation algorithm for the {\PResAll} problem,
where $n$ is the number of jobs and $m$ is the number of resources.
\end{theorem}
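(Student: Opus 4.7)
The plan is to combine an LP relaxation with an iterative reduction that repeatedly invokes the $4$-approximation of Bar-Noy et al.\ for the full cover $\ResAll$ problem as a subroutine. The natural relaxation assigns a fractional copy count $x_i \geq 0$ to each resource $i \in \calI$ and a fractional selection $y_j \in [0,1]$ to each job $j \in \cJ$, with constraints $\sum_{j} y_j \geq k$ and $\sum_{i \sim t} w(i)\, x_i \geq \sum_{j \sim t} y_j$ at every timeslot $t \in \cT$, minimizing $\sum_i c(i)\, x_i$. Let $\lp^\ast$ denote its optimum; by construction $\lp^\ast \leq \opt$.

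I would then run an outer loop of at most $O(\log(n+m))$ rounds. In each round I solve the LP on the residual instance (with the already-covered jobs removed and the partiality parameter decreased by the number of jobs just covered). Call a job \emph{heavy} if $y_j \geq 1/2$ and \emph{light} otherwise. If at least half of the currently required jobs are heavy, I fix $y_j := 1$ on those and scale $x$ up by $2$; this is a feasible fractional full cover for the heavy jobs of LP value at most $2\lp^\ast$, so the Bar-Noy et al.\ algorithm produces an integral resource multiset of cost $O(\lp^\ast) = O(\opt)$ that covers all of them, which I commit and delete. If instead more than half of the required jobs are light, I apply randomized rounding after scaling: each light job $j$ is selected with probability $2y_j$ and the fractional resource vector is rounded up to integrally cover the selected jobs' profile, again routed through the full cover subroutine; with constant probability this both selects a constant fraction of the still-required jobs and uses resources of cost $O(\lp^\ast)$.

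Either branch pays $O(\opt)$ in the current round and shrinks the number of still-required jobs by a constant factor. Since this count starts at $k \leq n \leq n+m$, after $O(\log(n+m))$ rounds it has dropped to zero, the accumulated multiset of resources forms a feasible solution to $\PResAll$, and the total cost is $O(\log(n+m)) \cdot \opt$, as promised.

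The main obstacle is the light-jobs case: when every $y_j$ is bounded well below $1$, the LP uses fine-grained fractional sharing of capacity across many jobs, and naive rounding can easily either violate the profile constraint at some timeslot or fail to fully select any job. I expect handling this to require either a preliminary guess of the most expensive resource used by $\opt$ (so a Chernoff-type union bound over the $O(n+m)$ timeslots and jobs becomes applicable) or an appeal to the combinatorial structure of interval cover polytopes underlying the $4$-approximation of Bar-Noy et al. Reconciling the two cases so that each round simultaneously advances the partiality counter by a constant factor and keeps the per-round cost bounded against a residual instance whose $\opt$ may drift is, in my view, the technical crux of the argument.
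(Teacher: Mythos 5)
There is a genuine gap, and it sits exactly where your whole cost accounting lives: the claim that the rounded solution in each round costs $O(\lp^\ast)$. The natural LP you write down has \emph{unbounded integrality gap}, so no rounding procedure --- Bar-Noy et al.'s or any other --- can produce an integral resource multiset of cost $O(\lp^\ast)$. Concretely, take a single timeslot with one unit of demand and a single resource of capacity $W$ and cost $c$: the LP sets $x = 1/W$ and pays $c/W$, while every integral solution pays $c$, a gap of $W$. This breaks the heavy branch (the ``feasible fractional full cover of value $2\lp^\ast$'' need not admit any integral cover of comparable cost) and the light branch alike. The Bar-Noy et al.\ guarantee is a $4$-approximation against the \emph{integral} optimum of the full-cover instance it is handed, so to salvage the heavy branch you would instead have to show that integrally covering the entire heavy set costs $O(\opt)$ --- but the heavy set is an artifact of the LP solution, need not be contained in the job set that $\opt$ covers, and no such bound is argued. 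On top of this, you yourself concede that the light-jobs case (the randomized rounding, the union bound over timeslots, and the simultaneous guarantees on coverage count and cost) is unresolved; since the dichotomy can land in that case on every round, the argument is incomplete even granting the heavy branch. (One part of your worry is actually fine: the residual optimum never exceeds $\opt$, since $\opt$'s resources still cover at least the residual quota among the surviving jobs; the drift is not the problem --- the LP lower bound is.)

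For contrast, the paper takes an entirely combinatorial route with no LP: it partitions the jobs by length into $L = O(\log(\ell_{\max}/\ell_{\min}))$ \emph{mountain ranges}, obtains a constant-factor algorithm for a single mountain range (via a structural lemma showing a near-optimal solution may discard only jobs extremal in start- or end-time, plus a reduction to an auxiliary covering problem solved by dynamic programming), and then combines the ranges with an outer dynamic program over how many jobs to take from each range. The logarithmic factor arises because the same resources may be charged once per range, not from an iterative-rounding round count. If you want to pursue an LP-based proof, you would at minimum need to strengthen the relaxation (e.g., with knapsack-cover-type inequalities) to close the capacity integrality gap, and supply a genuine rounding argument for the light case; as written, the proof does not go through.
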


The proof of the above theorem goes via the claim that the input set of jobs can be 
partitioned into a logarithmic number of {\em mountain ranges}. 
A collection of jobs $M$ is called a {\em mountain} if there exists a timeslot $t$, such that
all the jobs in this collection span the timeslot $t$; the specified timeslot where the jobs 
intersect will be called the {\em peak} timeslot of the mountain (see Figure~\ref{fig:aa};
jobs are shown on the top and the profile is shown below).
The justification for this linguistic convention is that if we look at the profile of such a 
collection of jobs, the profile forms a bitonic sequence, increasing in height until the peak, and 
then decreasing. 
The {\em span} of a mountain is the interval of timeslots where any job in the mountain is active.  
A collection of jobs $\calM$ is called a {\em mountain range}, if the jobs can be partitioned into 
a sequence $M_1, M_2, \ldots, M_r$ such that each $M_i$ is a mountain and the spans of any two mountains 
are non-overlapping (see Figure \ref{fig:bb}).
The decomposition lemma below shows that the input set of jobs can be partitioned into a logarithmic number
of mountain ranges.  For a job $j$ with 
start- and end-times $s(j)$ and $e(j)$, let its {\em length} be $\ell_j = (e(j) - s(j) +1)$). 
Let  $\ell_{\min}$ be the shortest job length, and $\ell_{\max}$ the longest job length. 
The proof of the lemma is inspired by the algorithm for the Unsplittable Flow Problem on 
a line, due to Bansal et al.~\cite{BansalFKS09}, and it is given in Appendix \ref{sec:DDD}.

\begin{figure}[t!]
\begin{minipage}{0.4\linewidth}
\centering
\includegraphics[width=1in]{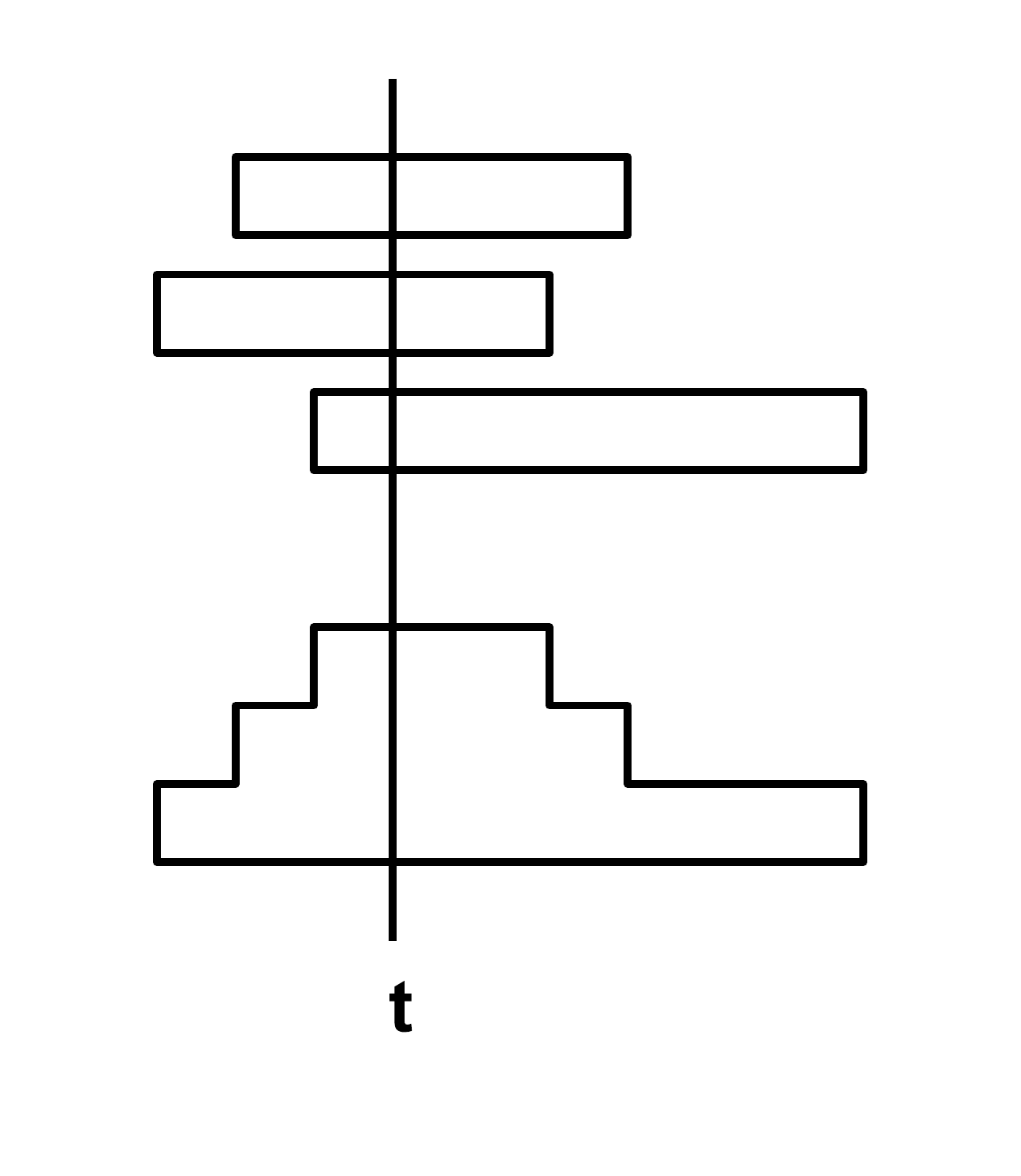}
\caption{
A Mountain $M$
}
\label{fig:aa}
\end{minipage}
\centering
\begin{minipage}{0.55\linewidth}
\includegraphics[width=2.5in]{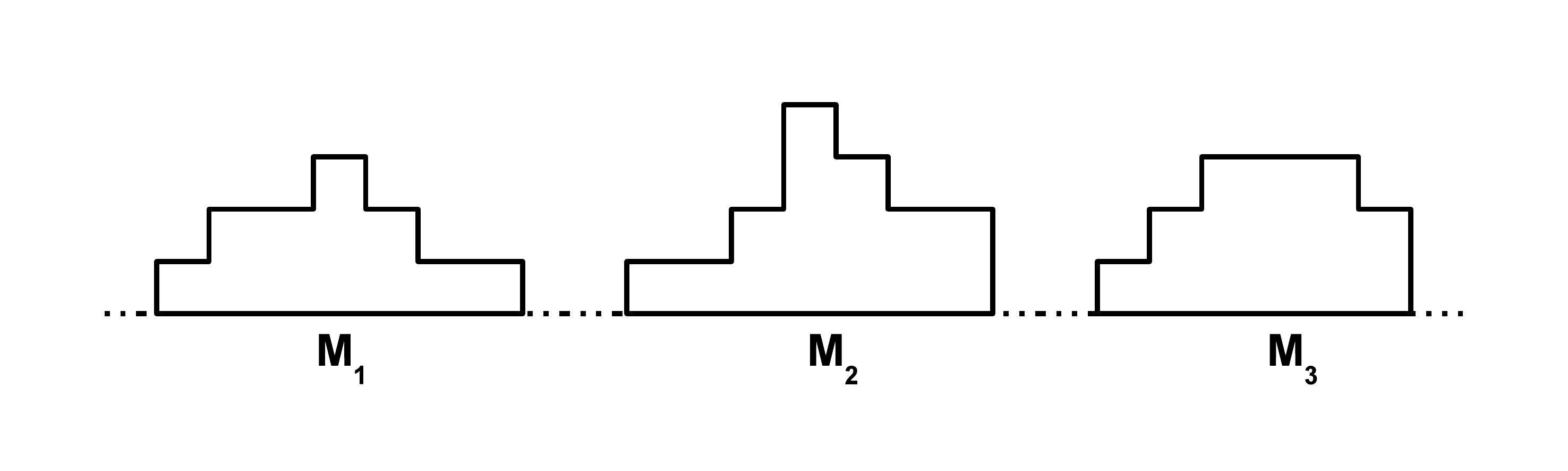}
\caption{
A Mountain Range ${\cal M}=\{M_1, M_2, M_3\}$
}
\label{fig:bb}
\end{minipage}
\end{figure}

\begin{lemma}
\label{lem:XXX}
The input set of jobs can be partitioned into groups, $\calM_1, \calM_2, \ldots, \calM_L$, such that 
each $\calM_i$ is a mountain range and $L \leq 4\cdot\lceil{\log \frac{\ell_{\max}}{\ell_{\min}}}\rceil$.
\end{lemma}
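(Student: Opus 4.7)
My plan is to first bucket the jobs by dyadic length classes and then decompose each class into at most four mountain ranges via an anchor-plus-coloring argument. For integer $c \geq 0$, let $\cJ_c = \{\, j \in \cJ : \ell_j \in [2^c \ell_{\min},\, 2^{c+1}\ell_{\min})\,\}$; every job falls into exactly one such class, and after a small adjustment of the topmost class so that it includes $\ell_{\max}$, the number of nonempty classes is at most $\lceil \log(\ell_{\max}/\ell_{\min}) \rceil$. It therefore suffices to show that each $\cJ_c$ can be partitioned into at most four mountain ranges.

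Fix a class $\cJ_c$ and write $\ell = 2^c \ell_{\min}$, so that every job in $\cJ_c$ has length in $[\ell, 2\ell)$. I would introduce anchor timeslots $p_k = k\ell$ for $k = 1, 2, \ldots, \lfloor T/\ell \rfloor$. A short arithmetic check shows that every $j \in \cJ_c$ contains at least one anchor in its interval (its length is $\geq \ell$, so $\lceil s(j)/\ell \rceil \cdot \ell$ lies in $[s(j), e(j)]$) and at most two anchors (its length is $< 2\ell$, so three consecutive anchors cannot fit). Assign each job to the leftmost anchor it contains, and let $M_k$ be the multiset of jobs assigned to $p_k$. By construction, every job in $M_k$ spans $p_k$, so $M_k$ is a mountain with peak $p_k$.

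The final step is to color the mountains by $k \bmod 4$ and argue that each color class forms a mountain range. Each job in $M_k$ contains $p_k$ and has length at most $2\ell - 1$, so its interval is contained in the window $W_k := [p_k - 2\ell + 2,\, p_k + 2\ell - 2]$; hence the span of $M_k$ lies in $W_k$. If $k' - k \geq 4$, the left endpoint of $W_{k'}$ equals $p_k + 2\ell + 2$, which strictly exceeds the right endpoint $p_k + 2\ell - 2$ of $W_k$. Thus mountains with the same color have pairwise disjoint spans, and ordering them by anchor position exhibits each color class as a mountain range. Aggregating four mountain ranges per class over the $\lceil \log(\ell_{\max}/\ell_{\min}) \rceil$ classes yields the claimed bound.

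The only real obstacle is to be disciplined about the arithmetic: the anchor-coverage claim and the window-disjointness computation are both elementary but sensitive to off-by-one choices and to the extreme case $\ell = 1$. I would pin down a single half-open convention for the length classes and work in integer arithmetic throughout, so that all corner cases fall out of the same chain of inequalities and the final constant comes out to exactly $4$ per class.
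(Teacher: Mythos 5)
Your proof is correct and follows essentially the same route as the paper's: dyadic length classes, anchor timeslots at multiples of the class length $\ell$ (each job of length in $[\ell,2\ell)$ contains one or two anchors and is assigned to one, forming mountains), and a mod-$4$ coloring of the anchors to make the spans within each color class pairwise disjoint. Your version is merely more explicit about the leftmost-anchor assignment and the off-by-one arithmetic; no substantive difference.
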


Theorem~\ref{thm:xCCC} (see below) provides a $c$-approximation algorithm (where $c$ is a constant) for the special
case where the input set of jobs form a single mountain range. 
We now prove Theorem \ref{thm:xAAA}, assuming Lemma \ref{lem:XXX} and Theorem \ref{thm:xCCC}. 

\par\noindent
{\bf Proof of Theorem~\ref{thm:xAAA}.}
Let $\cJ$ be the input set of jobs, $\calI$ be the input set of resources and $k$ be the partiality parameter.
Invoke Lemma \ref{lem:XXX} on the input set of jobs $\cJ$ and obtain a partitioning of $\cJ$
into mountain ranges $\calM_1, \calM_2, \ldots, \calM_{L}$, where $L=4\cdot \ceil{\log(\ell_{\max}/\ell_{\min})}$.
Theorem \ref{thm:xCCC} provides a $c$-approximation algorithm $\cA$ for the {\PResAll} problem wherein
the input set of jobs form a single mountain range, where $c$ is some constant.
We shall present a $(cL)$-approximation algorithm for the {\PResAll} problem.

For $1\leq q \leq L$ and $1\leq \kappa \leq k$, let $\cA(q, \kappa)$ denote the 
cost of the (approximately optimal) solution returned by the algorithm in Theorem \ref{thm:xCCC}
with $\calM_q$ as the input set of jobs, $\calI$ as the input set of resources and $\kappa$
as the partiality parameter.
Similarly, let $\opt(q, \kappa)$ denote the cost of the optimal solution for covering $\kappa$ of the jobs in the 
mountain range $\calM_q$. Theorem~\ref{thm:xCCC} implies that $\cA(q,\kappa) \leq c\cdot \opt(q,\kappa)$.

The algorithm employs dynamic programming. 
We maintain a $2$-dimensional DP table $\DP[\cdot,\cdot]$. 
For each $1\leq q\leq L$ and $1\leq \kappa \leq k$,
the entry $\DP[q,\kappa]$ would 
store the cost of a (near-optimal) feasible solution
covering $\kappa$ of the jobs from $\calM_1 \cup \calM_2 \cup \cdots \cup \calM_q$.
The entries are calculated as follows.
\begin{equation*} 
\DP[q,\kappa] = \min_{\kappa' \leq \kappa} \{\DP[q-1, \kappa - \kappa'] + \cA(q,\kappa')\}.
\end{equation*}

The above recurrence relation considers covering $\kappa'$ jobs from the mountain $\calM_q$, 
and the remaining $\kappa - \kappa'$  jobs from the mountain ranges $\calM_1, \cdots, \calM_{q-1}$.
Using this dynamic program, we compute a feasible solution to the original problem instance 
(i.e., covering $k$ jobs from all the mountain ranges $\calM_1,\calM_2, \ldots, \calM_L$);
the solution would correspond to the entry $\DP[L,k]$.
Consider the optimum solution $\opt$ to the original problem instance. 
Suppose that $\opt$ covers $k_q$ jobs from the mountain range $\calM_q$ (for $1\leq q \leq L$), 
such that $k_1 + k_2 +\cdots + k_L = k$. 
Observe that 
\begin{eqnarray*}
\DP[L,k] 
&\leq& \sum_{q=1}^L \calA(q, k_q)\\
&\leq& c\cdot{\sum_{q=1}^L \opt(q, k_q)},
\end{eqnarray*}
where the first statement follows from the construction of the dynamic programming table
and the second statement follows from the guarantee given by algorithm $\calA$.
However the maximum of $\opt(q, k_q)$ (over all $q$) is a lower bound for $\opt$
(we cannot say anything stronger than this since $\opt$ might 
use the same resources to cover jobs across multiple subsets $\calM_q$). 
This implies that $\DP[L,k] \leq c\cdot L \cdot\opt$. This proves
the $(cL)$-approximation ratio. 

\begin{theorem}
\label{thm:xCCC}
There exists a constant factor approximation algorithm for the special case of the {\PResAll} problem,
wherein the input set of jobs form a single mountain range $\calM$.
\end{theorem}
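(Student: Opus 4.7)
The plan is to round a natural LP relaxation, using the special geometry of a mountain range together with the existing full-cover algorithm of Bar-Noy et al.~\cite{Bar-Noy}.

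First I would write the LP with variables $x_j \in [0,1]$ for jobs and $y_i \geq 0$ for (fractional copies of) resources, minimizing $\sum_i c(i)\, y_i$ subject to the partiality constraint $\sum_j x_j \geq k$ and the profile-coverage constraints $\sum_{i \sim t} w(i)\, y_i \geq \sum_{j \sim t} x_j$ for every timeslot $t$. Let $(x^*,y^*)$ be an optimal fractional solution of value $\lp^*$, and for each mountain $M_q \in \calM$ set $k^*_q = \lceil \sum_{j \in M_q} x^*_j \rceil$, so that $\sum_q k^*_q \geq k$.

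The core step is to pick, from each mountain $M_q$, an integral subset $J'_q$ of size roughly $k^*_q$ such that the combined profile of $J' = \bigcup_q J'_q$ is dominated at every timeslot by a constant multiple of the fractional LP profile $P_{x^*}$. The mountain structure is crucial here: because every job in $M_q$ passes through the common peak $t_q$ and the spans of different mountains are disjoint, any subset of $M_q$ induces a bitonic profile around $t_q$, determined on the left solely by the left-extents $L_j = t_q - s(j)$ and on the right solely by the right-extents $R_j = e(j) - t_q$ of the chosen jobs. A careful selection rule---for instance, treating the left and right halves of each mountain as two separate covering subproblems and taking the union of their individual $k^*_q$-selections, or picking jobs in a combined left/right Pareto order weighted by $x^*_j$---should satisfy the domination bound for some absolute constant.

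Once $J'$ is fixed with $|J'| \geq k$, its profile is at most $O(1) \cdot P_{x^*}$, so a constant-factor scaling of $y^*$ fractionally covers $J'$; feeding $(J',\calI)$ to the $4$-approximation of Bar-Noy et al.~\cite{Bar-Noy} for the full-cover {\ResAll} problem then yields an integral multiset of resources of cost $O(1) \cdot \lp^* \leq O(1) \cdot \opt$. The main obstacle I foresee is proving the bitonic-profile domination lemma: a naive rule such as ``take the top $k^*_q$ jobs of $M_q$ ranked by $x^*_j$'' can fail when a small fractional weight is attached to a job that extends far from the peak, causing an unbounded blow-up at timeslots far from $t_q$ where the LP has paid very little. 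The selection must therefore simultaneously respect the left- and right-extent orderings within each mountain, and I expect the analysis to proceed by charging, timeslot by timeslot within each mountain's span, the integral profile contribution against the fractional mass $P_{x^*}(t)$.
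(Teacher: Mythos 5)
Your plan founders at the very first step: the natural LP you write down has an \emph{unbounded} integrality gap for the partial problem, already for a single mountain. Take $n$ identical jobs, each equal to the single timeslot $[1,1]$, partiality $k=1$, and one resource spanning $[1,1]$ with capacity $n$ and cost $C$. The LP sets $x^*_j=1/n$ for every job and $y^*=1/n$, giving $\lp^*=C/n$, while any integral solution must buy a whole copy of the resource and pay $C$. So no selection of $J'$ and no rounding of $y^*$ can ever be charged to $O(1)\cdot\lp^*$. The failure is not in the profile-domination step (here $P_{J'}(1)=1=P_{x^*}(1)$, so domination holds with constant $1$); it is in the final link of your chain. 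The guarantee of Bar-Noy et al.\ is against the \emph{integral} full-cover optimum of $J'$, and the integral cost of covering a profile can exceed the fractional cost of covering that same profile by a factor of $n$ (precisely because a high-capacity resource can be bought in a tiny fraction by the LP but only whole copies integrally). Hence ``a constant-factor scaling of $y^*$ fractionally covers $J'$'' does not imply any bound on the cost of the integral cover you actually output. To make an argument of this shape work you would have to compare the cover of $J'$ against the integral optimum of the partial instance directly, which requires coordinating the choice of $J'$ with the optimum's own job selection---and the LP gives you no handle on that. A secondary, independent gap: even granting the domination lemma within one mountain, the per-mountain ceilings $k^*_q=\lceil\sum_{j\in M_q}x^*_j\rceil$ destroy it across a mountain range, since a mountain carrying total fractional mass $\varepsilon$ forces you to select one whole job whose unit profile is nowhere paid for by $P_{x^*}$; and the domination lemma itself is only conjectured, not proved.

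The paper avoids LP relaxations entirely. It first proves a purely combinatorial structure lemma for a single mountain (a near-optimal solution discards only jobs extremal in start-time or end-time, losing a factor $2$), which yields an $8$-approximation for one mountain by enumerating the two discard counts and invoking Bar-Noy as a black box against the \emph{integral} full-cover optimum. For a mountain range it then splits each resource into narrow and wide pieces (factor $3$), collapses each mountain to its peak, encodes the narrow resources of each mountain as a menu of ``short'' resources priced by the single-mountain algorithm, and solves the resulting {\lspc} instance by dynamic programming (factor $16$). Every comparison along the way is against an integral optimum, which is exactly what your LP-based charging cannot deliver.
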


The first step in proving the above theorem is to design an algorithm for handling the special case where the input set of jobs
form a single mountain. This is accomplished by the following theorem. The proof is given in 
Section \ref{sec:mountain}.

\begin{theorem}
\label{thm:xDDD}
There exists an $8$-approximation algorithm for the special case of the {\PResAll} problem
wherein the input set of jobs for a single mountain $M$.
\end{theorem}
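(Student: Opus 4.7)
My plan is to reduce the single-mountain partial cover problem to a full-cover instance (for which Bar-Noy et al.~\cite{Bar-Noy} give a $4$-approximation) via an LP relaxation and a rounding scheme that exploits the mountain's bitonic structure. I would first write the natural LP relaxation with variables $x_j \in [0,1]$ (whether job $j \in M$ is covered) and $y_i \ge 0$ (number of copies of resource $i$), minimizing $\sum_i c(i) y_i$ subject to $\sum_{j \in M} x_j \ge k$ and the capacity constraints $\sum_{j \sim t} x_j \le \sum_{i \sim t} w(i) y_i$ for every timeslot $t$. Let $(x^*, y^*)$ denote an optimal fractional solution of cost $\opt_{\mathrm{LP}} \le \opt$ and let $L^*(t) = \sum_{j \sim t} x^*_j$ be its demand profile. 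Because all jobs of $M$ share the peak $t^*$, for $t \le t^*$ we have $j \sim t$ iff $s(j) \le t$, so $L^*$ is non-decreasing on the left of $t^*$, non-increasing on its right, and $L^*(t^*) \ge k$.

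The crux of the argument is to round $x^*$ to a $\{0,1\}$-set $S \subseteq M$ of at least $k$ jobs whose integer demand profile $P_S(t) := |\{j \in S : j \sim t\}|$ satisfies $P_S(t) \le 2 L^*(t)$ pointwise. The bitonicity of $L^*$ makes this plausible: for each integer level $h = 1, 2, \ldots, k$ the LP essentially prescribes a left-threshold $s_h$ (the time at which the cumulative $x^*$-mass counted from the right crosses height $h$) and a symmetric right-threshold $e_h$, and one would try to assign an actual job of $M$ to each level whose $(s(j), e(j))$ matches $(s_h, e_h)$ within a factor-$2$ slack. The two sides of the peak must be handled in a coupled fashion, since the same set $S$ contributes to both the left and the right profile.

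Once such an $S$ is obtained, the rest is clean. The integer profile $P_S$ is fractionally covered by the scaled resource vector $2 y^*$ at cost $2 \opt_{\mathrm{LP}} \le 2 \opt$, so the optimum full cover of $P_S$ costs at most $2 \opt$; applying the Bar-Noy et al.\ $4$-approximation to the full-cover instance with demand $P_S$ and resource set $\calI$ returns a multiset of resources of cost at most $4 \cdot 2 \opt = 8 \opt$, and paired with $S$ this yields the desired feasible solution for the mountain.

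I expect the main obstacle to be the coupled rounding step. A naive approach that picks the top-$k$ jobs by descending $s(j)$ for the left profile and the top-$k$ by ascending $e(j)$ for the right profile, and then takes their union, can fail badly: a job with tiny $x^*_j$ but extreme $s(j)$ or $e(j)$ will still land in one of these two sets and inflate the profile on the opposite side by much more than a factor of $2$. What is really needed is an LP-guided selection that uses the bitonic shape of $L^*$ to simultaneously control both sides of $t^*$ with a single set $S$; designing and analyzing this coupled rounding is where the real work lies.
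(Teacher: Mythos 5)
Your high-level plan (choose $k$ jobs whose demand profile is within a constant factor of something the optimum can pay for, then hand the resulting full-cover instance to the Bar-Noy et al.\ $4$-approximation) matches the paper's, but the step that makes this work is exactly the one you leave open. You explicitly defer the ``coupled rounding'' that would produce an integral set $S$ with $|S|\ge k$ and $P_S(t)\le 2L^*(t)$, and you observe yourself that the naive candidates fail; without that lemma there is no proof. It is also unclear that the LP route can deliver it, since what you need amounts to an integrality-gap bound for the partial-cover LP on a mountain, a strictly stronger (and unproven) claim than what is actually required. The paper sidesteps the LP entirely with a combinatorial exchange argument (Lemma~\ref{BBB}): starting from an arbitrary integral solution $(R_S,J_S)$ with $|J_S|=k$, it repeatedly discards the currently leftmost-starting and rightmost-ending jobs \emph{in pairs}, so that every discarded job of $J_S$ is matched to a job of $J_X\setminus J_S$ nested inside the span of its pair; doubling $R_S$ then covers $J_X$. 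This shows that some solution of cost at most $2\,\opt$ keeps exactly the jobs $\cJ\setminus(\{l_i:i\le q_1\}\cup\{r_i:i\le q_2\})$ for some pair $(q_1,q_2)$, and the algorithm simply enumerates all $O(n^2)$ such pairs and runs the full-cover algorithm on each, taking the best.

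A second, independent gap: even granting your rounding, the accounting ``$2y^*$ fractionally covers $P_S$ at cost $2\opt_{\mathrm{LP}}$, hence the optimum full cover of $P_S$ costs at most $2\opt$, hence Bar-Noy returns at most $8\opt$'' conflates fractional and integral covers. A fractional resource vector of cost $C$ covering $P_S$ does not by itself bound the \emph{integral} full-cover optimum by $C$, and the Bar-Noy et al.\ guarantee as invoked in the paper is against the integral optimum. You would need either an integrality-gap bound for the full-cover LP or a version of their algorithm with a guarantee relative to the fractional optimum; neither is established. The paper avoids this entirely by exhibiting an explicit integral cover of $J_X$ (two copies of $R_S$) of cost $2\opt$, so the $4$-approximation applies directly and yields the factor $8$.
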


We now sketch the proof of Theorem \ref{thm:xCCC}. 
Let the input mountain range be ${\calM}$ consisting of mountains $M_1,M_2, \ldots, M_r$.
The basic intuition behind the algorithm is to ``collapse'' each mountain $M_q$ into a single timeslot. 
A resource interval $i$ is said to intersect a mountain $M$ if the interval $i$ and the span of 
$M$ overlap; the resource $i$ is said to {\em fully span} the mountain $M$, if 
the span of $M$ is contained in the interval $i$; the resource $i$ is said to
be contained in the mountain $M$, if the interval $i$ is contained in the span of $M$.
It may be possible that for a resource interval $i$ and a mountain $M$,
neither $i$ fully spans $M$ nor is $i$ contained in $M$.
However, at a factor three loss in the approximation ratio, we can transform an input instance into an instance
satisfying the following property. The resource intervals in the modified instance can be 
classified into two categories: (1) {\em narrow} resources $i$ having the property
that the interval $i$ is contained in the span of a specific single mountain $M$;
(2) {\em wide} resources $i$ having the property that 
if $i$ intersects any mountain $M$, then it fully spans the mountain. 

The notion of collapsing mountains into timeslots is natural when the input instance 
consists only of wide resources. 
This is because we can collapse the mountains $M_1, M_2, \ldots, M_r$ into timeslots $1, 2, \ldots, r$.  
Furthermore, for each wide resource $i$, consider the sequence of mountains 
$M_p, M_{p+1}, \ldots, M_q$ (for some $p \leq q$) that are fully spanned by the resource $i$;
then we represent $i$ by an interval that spans the timeslots $[p,q]$. 
However, the case of narrow resources is more involved
because a narrow resource does not fully span the mountain containing it. 
Based on the above intuition, we define a problem called the {\em Long Short Partial Cover} (\lspc).
The algorithm for handling a mountain range goes via a reduction to the {\lspc} problem.

{\it  Problem Definition (\lspc):} 
We are given a demand profile over a range $[1,T]$, 
which specifies an integral demand $d_t$ at each timeslot $t\in [1,T]$.
The input resources are of two types, {\em short} and {\em long}. 
A short resource spans only one timeslot, whereas a long resource can span one or more timeslots.
Each resource $i$ has a cost $c(i)$ and a capacity $w(i)$. 
The input also specifies a {\em partiality parameter} $k$. 
A feasible solution $S$ consists of a multiset of resources $S$ and a coverage profile.
A {\em coverage profile} is a function that assigns
an integer $k_t$ for each timeslot $t$ satisfying $k_t \leq d_t$.
The solution should have the following properties: 
(i) $\sum_t k_t \geq k$;
(ii) at any timeslot $t$, the sum of capacities of the resource intervals from $S$ active at $t$ is at least $k_t$;
(iii) for any timeslot $t$, at most one of the short resources spanning the timeslot $t$ 
is picked (however, multiple copies of a long resource may be included). 
The objective is to find a feasible solution having minimum cost. 
See Figure~\ref{fig:dd} for an example (in the figure, short resources are shaded).

\begin{figure*}[t]
\begin{center}
\fbox{
\includegraphics[width=1.5in]{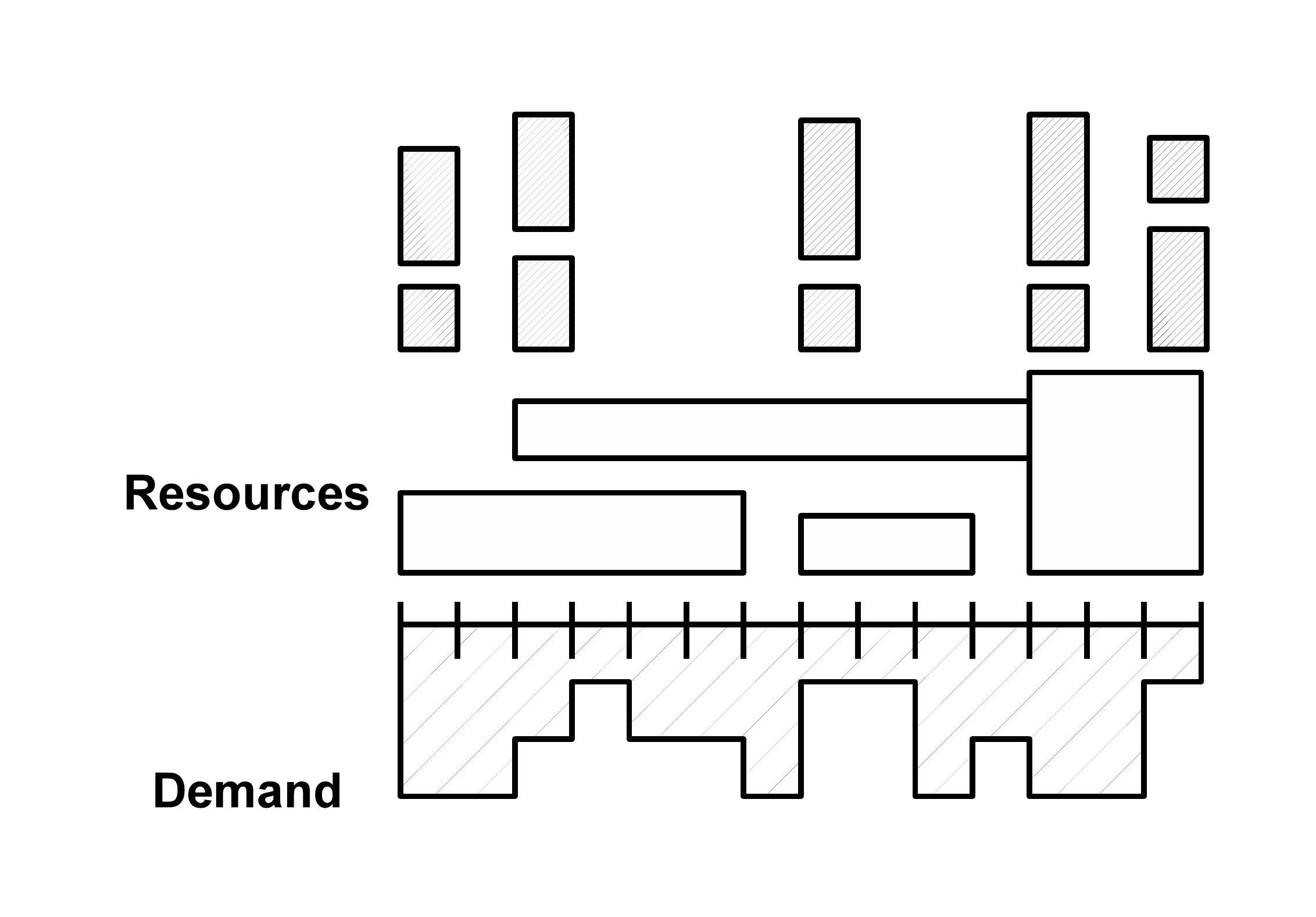}
}
\end{center}
\caption{The {\lspc} problem}
\label{fig:dd}
\end{figure*}

The advantage with the {\lspc} problem is that the demands are restricted to single timeslots; in contrast,
in the {\PResAll} problem, the demands or jobs can span multiple timeslots.
Theorem \ref{thm:xEEE} (see below) shows that the {\lspc} problem can be approximated within a factor of $16$.
The reduction from the {\PResAll} problem restricted to a single mountain range 
(as in Theorem \ref{thm:xCCC}) to the {\lspc} problem goes by representing each mountain in the input mountain
range $\calM$ by a single timeslot in the {\lspc} instance; 
the wide resources will correspond to long resources in the {\lspc} instance.
The reduction handles the narrow resources using the short resources; the
constraint (iii) in the {\lspc} problem definition is crucially employed in this process.
The reduction from the case of single mountain range to the {\lspc} problem 
shown in Appendix~\ref{app:red} and Theorem~\ref{thm:xCCC} is proved there.

\begin{theorem}
\label{thm:xEEE}
There exists a $16$-approximation algorithm for the {\lspc} problem.
\end{theorem}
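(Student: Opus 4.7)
My plan is to prove Theorem~\ref{thm:xEEE} via LP relaxation followed by two-phase rounding: a \emph{filtering} phase that reduces the partial cover instance to a full cover instance at a constant factor loss, and a \emph{full cover} phase that approximately solves the resulting full cover {\lspc} instance within a constant factor by extending the $4$-approximation of Bar-Noy et al.~\cite{Bar-Noy}. Composing the two constant factor losses yields the claimed ratio of $16$.

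I would begin by writing the natural LP relaxation of {\lspc}. Let $x_i \ge 0$ denote the (fractional) number of copies of each long resource $i$, let $y_i \in [0,1]$ denote the fractional selection of each short resource $i$, and let $k_t \in [0, d_t]$ denote the fractional coverage at timeslot $t$. The constraints enforce $\sum_{i \text{ long}, i \sim t} w(i)\, x_i + \sum_{i \text{ short}, i \sim t} w(i)\, y_i \ge k_t$ at every timeslot $t$, the ``at most one short'' constraint $\sum_{i \text{ short}, i \sim t} y_i \le 1$ at every $t$, and the partial cover constraint $\sum_t k_t \ge k$. Solving the LP in polynomial time yields $(x^*, y^*, k^*)$ of cost at most $\opt$.

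For the filtering phase, I would follow the spirit of classical $\alpha$-filtering for partial covering: call a timeslot $t$ \emph{heavy} if $k_t^* \ge \alpha d_t$ for a threshold $\alpha$, set $\tilde k_t := d_t$ on heavy slots and $\tilde k_t := 0$ on light ones, and scale $(x^*, y^*)$ by $1/\alpha$ so that the scaled solution fractionally covers the profile $(\tilde k_t)$. This step is where I expect the main technical obstacle to lie, because the cap $k_t \le d_t$ can force $\alpha$ to be small in order to preserve $\sum_t \tilde k_t \ge k$. I would address this with a case split on the LP solution: in the favorable case $\sum_{t : k_t^* \ge d_t/2} d_t \ge k$, the threshold $\alpha = 1/2$ already works and filtering loses only a factor $2$; in the complementary case, the bulk of the LP coverage sits on lightly covered timeslots and I would construct an auxiliary partial cover instance restricted to those slots and iterate, carefully bounding the accumulated loss at a constant.

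Finally, the filtered instance is a full cover {\lspc} instance that asks to cover $d_t$ units at every selected timeslot using long resources (in any multiplicity) and at most one short resource per timeslot. Restricted to long resources, this is a ResAll-style capacitated interval cover, admitting Bar-Noy et al.'s $4$-approximation~\cite{Bar-Noy} via primal-dual. The short resources contribute at most one capacity boost per timeslot, and I would integrate them by an additional LP-based rounding step that, at each timeslot, commits to the cheapest short option whose capacity usefully reduces the residual demand and passes the remainder to the long-resource algorithm; the local-vs-global tradeoff between short and long resources incurs a further factor $2$, giving an $8$-approximation for the full cover {\lspc}. Composing the $\times 2$ from filtering with the $\times 8$ from the full cover step gives the final $16$-approximation.
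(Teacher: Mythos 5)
Your proposal diverges fundamentally from the paper's proof, and it contains a gap that I do not believe can be repaired along the lines you sketch. The central problem is that the natural LP relaxation you start from has an unbounded integrality gap, for two independent reasons. First, because resources are capacitated, the constraint $\sum_{i\sim t} w(i)x_i \ge k_t$ admits cheap fractional solutions that no integral solution can imitate: take a single timeslot with $d_t = D$, partiality parameter $k=1$, and a single long resource of capacity $D$ and cost $1$; the LP sets $k_t=1$ and $x_i = 1/D$ for a cost of $1/D$, while any integral solution costs $1$, so the gap is at least $D$. (Full-cover capacitated problems need knapsack-cover inequalities for exactly this reason, and Bar-Noy et al.\ sidestep the LP entirely via local ratio.) Second, your filtering step does not preserve the partiality constraint: with $3k$ timeslots each having $d_t=1$ and $k^*_t = 1/3$, the heavy set for threshold $\alpha = 1/2$ is empty, so $\sum_t \tilde k_t = 0 < k$. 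You acknowledge this obstacle, but the proposed remedy (``iterate, carefully bounding the accumulated loss at a constant'') is precisely the missing proof; partial covering problems are in general not amenable to single-threshold LP filtering, which is why the literature resorts to Lagrangian relaxation or combinatorial methods for them. The final step is also unsubstantiated: integrating the ``at most one short resource per timeslot'' constraint into Bar-Noy et al.'s algorithm with ``a further factor $2$'' is asserted, not argued, and the interaction between the choice of short resource and the residual demand passed to the long resources is exactly where the difficulty of {\lspc} lies.

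For contrast, the paper's proof is entirely combinatorial and the factor $16$ has a different origin. It defines \emph{SLRA} (single long resource assignment) solutions, in which at every timeslot the residual demand left after the unique short resource is covered by the copies of a \emph{single} long resource; Lemma~\ref{lem:SLRA} (imported from the structural Theorem~1 of \cite{esa2011}) shows an SLRA solution of cost at most $16\cdot\opt$ always exists, and Theorem~\ref{thm:xGGG} computes the \emph{optimum} SLRA solution exactly by dynamic programming over triples $\langle[a,b],q,h\rangle$, using the decomposition Lemma~\ref{lem:decomp} (short-only cover, time-cut, or interval-cut). The DP keeps the distribution of the $q$ covered units across timeslots as part of the state, which is how the partiality constraint is handled without any filtering. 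If you want to salvage an LP-based route you would at minimum need knapsack-cover inequalities plus a Lagrangian treatment of the partiality, which is a substantially different (and unwritten) argument.
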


The algorithm claimed in the above theorem is inspired by the work of \cite{esa2011}. In that paper, the authors
study a variant of the problem; in that variant,
there are only long resources and a solution $S$ must satisfy a set of $k$
timeslots $t_1,t_2,\ldots, t_k\in [1,T]$, where a timeslot $t$
is satisfied, if the sum of capacities of the resources in $S$ active at $t$
is at least the demand $d_t$; a solution is allowed to pick multiple copies of any resource (both long and short). 
The {\lspc} problem differs in two ways: first, a solution can satisfy the demand at a timeslot partially
and secondly, only one copy of a short resource can be picked.
These two differences give rise to complications and as a result, our algorithm is more involved.
The algorithm is provided in Section \ref{sec:lspc}.\\

\noindent
{\bf Organization of the paper:}
Lemma~\ref{lem:XXX} is proved in Appendix~\ref{sec:DDD}.
Theorem~\ref{thm:xDDD} and Theorem~\ref{thm:xEEE} are
proved in Section~\ref{sec:mountain} and Section~\ref{sec:lspc} respectively.
Assuming Theorem~\ref{thm:xEEE}, we prove Theorem~\ref{thm:xCCC} 
in Appendix~\ref{app:red}. 

\section{A Single Mountain: Proof of Theorem~\ref{thm:xDDD}}
\label{sec:mountain}
In this section, we give an $8$-factor approximation algorithm for the case of the {\PResAll} problem,
where the input jobs form a single mountain. 

The basic intuition is as follows. Given the structure of the jobs, we will show that there is a
{\em near-optimal} feasible solution that exhibits a nice property: 
the jobs discarded from the solution are extremal either in their
start-times or their end-times. 

\begin{lemma}
\label{BBB}
Consider the {\PResAll} problem for a single mountain.
Let $\J = \{ j_1, j_2, \ldots, j_n \}$ be the input set of jobs.
Let $S=(R_S, J_S)$ be a feasible solution such that $R_S$ covers the set of jobs $J_S$ with $|J_S|=k$. 
Let $C_S$ denote its cost.
Let $L = < l_1, l_2, \ldots, l_n >$ denote the jobs in increasing order of their start-times.
Similarly, let $R = < r_1, r_2, \ldots, r_n >$ denote the jobs in decreasing order of their end-times.
Then, there exists a feasible solution $X=(R_X, J_X)$ having cost at most $2 \cdot C_S$ such that
\begin{equation}\label{eqn:aa}
{\J} \setminus J_X = \{ {l_i} : i \le q_1 \} \cup \{ {r_i} : i \le q_2 \}
\end{equation}
for some $q_1,q_2 \ge 0$ where $|\cJ \setminus J_X|=n-k$.
\end{lemma}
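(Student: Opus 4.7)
The structural key is that every job in a mountain spans the peak $t^*$, so the demand profile of any subset $J$ splits cleanly at $t^*$: for $t \le t^*$, $P_J(t) = |\{j \in J : s(j) \le t\}|$ depends only on start times, and for $t \ge t^*$, $P_J(t) = |\{j \in J : e(j) \ge t\}|$ depends only on end times. My plan is to take $R_X$ to consist of two copies of $R_S$ (so $\mathrm{cost}(R_X) = 2 C_S$) and then exhibit a form-respecting discard set $D = \{l_i : i \le q_1\} \cup \{r_i : i \le q_2\}$ with $|D| = n-k$ such that $P_{J_X}(t) \le 2 P_{R_S}(t)$ at every $t$. Since $R_S$ already covers $J_S$, this reduces to finding $(q_1,q_2)$ for which $P_{J_X}(t) \le 2 P_{J_S}(t)$ pointwise.

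To guide the choice I would first look at the two extremes. At $(n-k,0)$ one obtains $J_X = J^L_k := \cJ \setminus \{l_1,\dots,l_{n-k}\}$, whose left profile is pointwise minimum among all $k$-subsets and so is dominated by $P^L_{J_S}$ on $[1,t^*]$ with no factor loss at all; symmetrically, $(0,n-k)$ gives $J_X = J^R_k$ that automatically handles the right half. Since neither extreme need satisfy the other side, I would interpolate between them via a discrete exchange: start at $(n-k,0)$, and repeatedly decrement $q_1$ by one while incrementing $q_2$ just enough to keep $|D| = n-k$ (accounting for any overlap of the two prefixes). A single such step undoes exactly one $L$-prefix discard and adds at most one new $R$-prefix discard, so $P^L_{J_X}(t)$ can grow by at most one at each $t \le t^*$ and $P^R_{J_X}(t)$ can shrink by at most one at each $t \ge t^*$.

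The core of the argument is then an intermediate-value step: the sweep must pass through some configuration at which the right-side inequality $P^R_{J_X}(t) \le 2 P^R_{J_S}(t)$ first becomes satisfied, and I would argue that at this moment the left-side inequality is still intact. My plan is a charging argument in which each unit of growth in $P^L_{J_X}(t)$ along the sweep is paid for by a job in $\cJ \setminus J_S$ already contributing to the positive imbalance $P^L_{\bar J_S}(t) - P^L_{J_S}(t)$, which is exactly what the factor-$2$ slack is designed to absorb. The mountain structure is essential here because it decouples the left and right halves of the profile, letting the two chargings run independently on $[1,t^*]$ and $[t^*,T]$.

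The main obstacle I expect is making this intermediate-value and charging argument precise, especially when the $L$- and $R$-prefixes overlap and force $q_1+q_2$ to exceed $n-k$ while $|D|$ must stay at exactly $n-k$. The per-step change of at most one on each side is what should guarantee that the sweep cannot overshoot the right-side threshold before the left-side budget is exhausted, but the formal argument will need to track the overlap bookkeeping with care, and to verify that the "first satisfaction" moment of the right inequality is well-defined despite the discrete jumps.
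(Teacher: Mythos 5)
Your reduction is the right one: taking $R_X$ to be two copies of $R_S$, it suffices to exhibit a prefix-form discard set $D=\{l_i: i\le q_1\}\cup\{r_i:i\le q_2\}$ of size $n-k$ with $P_{J_X}(t)\le 2P_{J_S}(t)$ for all $t$; and your supporting observations are correct --- the mountain structure decouples the two halves of the profile at the peak, the extreme $(n-k,0)$ satisfies the left inequality outright, and each sweep step changes each half-profile pointwise by at most one (in fact the left half-profile is pointwise non-decreasing and the right half-profile pointwise non-increasing along your sweep, so each of the two conditions holds on a contiguous block of sweep positions). But this only reduces the lemma to showing that the two blocks overlap, and that is exactly where your argument stops. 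The one-sentence charging claim --- that each unit of growth of $P_{J_X}(t)$ on the left is ``paid for'' by the imbalance $P_{\cJ\setminus J_S}(t)-P_{J_S}(t)$ --- is not a proof: that imbalance can be zero or negative at the relevant timeslots, the budget available at $t$ before the left condition fails is $2P_{J_S}(t)$ minus the starting value of $P_{J_X}(t)$, and nothing in your sweep controls how many steps (hence how much left-side growth) occur before the right condition is first met. Such control requires a balance between the number of $J_S$-jobs discarded from the left prefix and from the right prefix, which your sweep does not maintain and your sketch does not establish; as written, the intermediate-value step is circular, since the overlap of the two blocks is equivalent to the existence statement being proved.

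The paper closes this gap constructively rather than by an existence sweep: it advances pointers along $L$ and $R$, discarding jobs outside $J_S$ for free, and whenever jobs of $J_S$ must be discarded it discards the leftmost-starting and rightmost-ending surviving $J_S$-jobs \emph{as a pair}. Because every job contains the peak, every surviving job is contained in the union of the intervals of such a pair, so two copies of the resources that covered the pair in $R_S$ cover the two jobs that replace them under a bijection between discarded $J_S$-jobs and retained non-$J_S$-jobs; summing over pairs gives $P_{J_X\setminus J_S}(t)\le 2P_{J_S\setminus J_X}(t)$ and hence $P_{J_X}(t)\le 2P_{J_S}(t)$. If you want to salvage your route, the cleanest fix is to make this pairing the stopping rule of your sweep (halt when the numbers of $J_S$-jobs discarded on the two sides are balanced) rather than stopping at the first position where the right-hand condition holds.
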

\begin{proof}
We give a constructive proof to determine the sets $J_X$ and $R_X$.

We initialize the set $J_X$=${\cal J}$. At the end of the algorithm, 
the set $J_X$ will be the desired set of jobs covered by the solution.
The idea is to remove the jobs that extend most to the right or the left from the consideration of $J_X$.
The most critical aspect of the construction is to ensure that whenever we exclude any job from consideration
of $J_X$ that is already part of $J_S$, we do so in pairs of the leftmost and rightmost extending jobs of $J_S$
that are still remaining in $J_X$. 
We terminate this process when the size of $J_X$ equals the size of $J_S$, i.e., $k$.
We also initialize the set $U=\phi$. 
At the end of the algorithm, this set will contain the set of jobs removed from ${\J}$ that belonged to $J_S$
while constructing $J_X$.

We now describe the construction of $J_X$ formally.
We maintain two pointers {\lptr} and {\rptr}; {\lptr} indexes the jobs in the sequence $L$ 
and {\rptr} indexes the jobs in the sequence $R$.
We keep incrementing the pointer {\lptr} and removing the corresponding job from $J_X$ 
(if it has not already been removed) until either the size of $J_X$ reaches $k$ 
or we encounter a job (say {\ljob}) in $J_X$ that belongs to $J_S$;  we do not yet remove the job {\ljob}.
We now switch to the pointer {\rptr} and start incrementing it and removing the corresponding job from $J_X$ 
(if it has not already been removed) until either the size of $J_X$ reaches $k$ 
or we encounter a job (say {\rjob}) in $J_X$ that belongs to $J_S$;  we do not yet remove the job {\rjob}.
If the size of $J_X$ reaches $k$, we have the required set $J_X$. 

Now suppose that $|J_X| \ne k$.
Note that both {\lptr} and {\rptr} are pointing to jobs in $J_S$.
Let {\ljob} and {\rjob} be the jobs pointed to by {\lptr} and {\rptr} respectively (note that 
these two jobs may be same).

We shall remove one or both of {\ljob} and {\rjob} from $J_X$ and put them in $U$. 
We classify these jobs into three categories: {\em single}, {\em paired} and {\em artificially paired}.

Suppose that $|J_X| \ge k+2$.
In this case, we have to delete at least 2 more jobs; so we delete both {\ljob} and {\rjob} and add them to $U$
as {\em paired} jobs.
In case {\ljob} and {\rjob} are the same job, we just delete this job and add it to $U$ as a {\em single} job.
We also increment the {\lptr} and {\rptr} pointers to the next job indices in their respective sequence.
We then repeat the same process again, searching for another pair of jobs.

Suppose that $|J_X|=k+1$.
In case {\ljob} and {\rjob} are the same job, we just delete this job and get the required set $J_X$ of size $k$;
We add this job to the set $U$ as a {\em single} job.
On the other hand, if {\ljob} and {\rjob} are different jobs,
we remove {\ljob} from $J_X$ and add it to $U$ as {\em artificially paired} with
its pair as the job {\rjob} ; note that we do not remove {\rjob} from $J_X$.

This procedure gives us the required set $J_X$.
We now construct $R_X$ by simply doubling the resources of $R_S$; meaning, that for each 
resource in $R_S$, we take twice the number of copies in $R_X$. 
Clearly $C_X = 2 \cdot C_S$.
It remains to argue that $R_X$ covers $J_X$.
For this, note that $U=J_S-J_X$ and hence $|U|=|J_X-J_S|$ (because $|J_X|=|J_S|=k$).
We create an arbitrary bijection $f : U \rightarrow J_X-J_S$.
Note that $J_X$ can be obtained from $J_S$ by deleting the jobs in $U$ and adding the jobs of $J_X - J_S$.

We now make an important observation:
\begin{observation}
\label{obs1}
For any {\em paired} jobs or {\em artificially paired} jobs $j_1$, $j_2$ added to $U$, 
all the jobs in $J_X$ are contained within the
span of this pair, i.e., for any $j$ in $J_X$, $s_j \ge \min \{ s({j_1}), s({j_2}) \}$ and $e_j \le \max \{ e({j_1}), e({j_2}) \}$.
Similarly for any {\em single} job $j_1$ added to $U$, all jobs in $J_X$ are contained in the span of $j_1$.
\end{observation}

For every {\em paired} jobs, $j_1$, $j_2$, Observation~\ref{obs1} implies that taking 2 copies of the 
resources covering $\{ j_1, j_2 \}$ suffices to cover $\{ f({j_1}), f({j_2}) \}$.
Similarly, for every {\em single} job $j$, the resources covering $\{ j \}$ suffice to cover $\{ f(j) \}$.
Lastly for every {\em artificially paired} jobs $j_1, j_2$ where $j_1 \in U$ and $j_2 \notin U$, taking 2 copies
of the resources covering $\{ j_1, j_2 \}$ suffices to cover $\{ f({j_1}), j_2 \}$.

Hence the set $R_X$ obtained by doubling the resources $R_S$ (that cover $J_S$) suffices to cover the jobs in $J_X$.
\end{proof}

Recall that Bar-Noy et al.~\cite{Bar-Noy} presented a $4$-approximation algorithm for 
the {\ResAll} problem (full cover version). Our algorithm for handling a single mountain works
as follows. 
Given a mountain consisting of the collection of jobs $\J$ and the number $k$, 
do the following for all possible pairs of numbers $(q_1, q_2)$ such that the set 
$J_X$ defined as per Equation~\ref{eqn:aa} in Lemma~\ref{BBB} has size $k$.
For the collection of jobs $J_X$, consider the issue of selecting a minimum cost set of 
resources to cover these jobs; note that this is a full cover problem. Thus, the $4$-approximation
of \cite{Bar-Noy} can be applied here. 
Finally, we output the best solution across 
all choices of $(q_1, q_2)$.  Lemma~\ref{BBB} shows that this is an $8$-factor approximation
to the {\PResAll} problem for a single mountain. 

\section{{\lspc} Problem: Proof of Theorem \ref{thm:xEEE}}
\label{sec:lspc}
Here, we present a $16$-approximation algorithm for the {\lspc} problem.

We extend the notion of profiles and coverage to ranges contained within $[1,T]$. 
Let $[a,b]$ contained in $[1,T]$ be a timerange. 
By a profile over $[a,b]$, we mean a function $Q$ that assigns a value $Q(t)$ to each timeslot $t\in [a,b]$. 
A profile $Q$ defined over a range $[a,b]$ is said to be {\em good}, if for all timeslots $t\in [a,b]$,
$Q(t)\leq d_t$ (where $d_t$ is the input demand at $t$).
In the remainder of the discussion, we shall only consider good profiles and so, we shall simply write
``profile'' to mean a ``good profile''.
The {\em measure} of $Q$ is defined to be the sum  $\sum_{t\in [a,b]} Q(t)$.

Let $S$ be a multiset of resources and let $Q$ be a profile over a range of timeslots $[a,b]$.
We say that $S$ is {\em good}, if it includes at most one short resource active at any timeslot $t$.
We say that $S$ covers the profile $Q$, 
if for any timeslot $t\in [a,b]$, the sum of capacities of resources
in $S$ active at $t$ is at least $Q(t)$. 
Notice that $S$ is a feasible solution to the input problem instance,
if there exists a profile $Q$ over the entire range $[1,T]$ such that 
$Q$ has measure $k$ and $S$ is a cover for $Q$.
For a timeslot $t\in [1,T]$, let $Q^{\sh}_S(t)$
denote the capacity of the unique short resource from $S$ active at $t$, if one exists; otherwise, 
$Q^{\sh}_S(t) = 0$.

Let $S$ be a good multiset of resources and let $Q$ be a profile over a range of timeslots $[a,b]$. 
For a long resource $i\in S$, let $f_S(i)$ denote the number of copies of $i$ included in $S$.
The multiset $S$ is said to be a {\em single long resource assignment cover} (SLRA cover) for $Q$,
if for any timeslot $t\in [a,b]$, there exists a long resource $i\in S$ such that
$w(i)f_S(i)\geq Q(t)-Q^{\sh}_S(t)$ (intuitively, the resource $i$ can cover the residual demand by itself,
even though other long resources in $S$ may be active at $t$).

We say that a good multiset of resources $S$ is an {\em SLRA solution} to the input {\lspc} problem instance,
if there exists a profile $Q$ over the range $[1,T]$ having measure $k$ such that $S$ is an SLRA cover for $Q$.
The lemma below shows that near-optimal SLRA solutions exist.

\begin{lemma}
\label{lem:SLRA}
Consider the input instance of the {\lspc} problem.
There exists an SLRA solution having cost at most 16 times the cost of the optimal solution.
\end{lemma}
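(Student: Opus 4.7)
The plan is to transform any optimal solution $S^*$ of the {\lspc} instance (covering some profile $Q^*$ of measure at least $k$) into an SLRA solution with cost at most $16 \cdot \mathrm{cost}(S^*)$, via a short sequence of reductions, each of which loses a small constant factor.

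I would first retain all short resources of $S^*$ unchanged, so that the new solution's short profile $Q^{\mathrm{sh}}$ equals $Q^{\mathrm{sh}}_{S^*}$; the ``good'' constraint (at most one short resource per timeslot) is inherited. The problem then reduces to producing a multiset of long resources that is an SLRA cover for the residual profile $D(t) := Q^*(t) - Q^{\mathrm{sh}}_{S^*}(t)$ on each timeslot. Next, I would normalize capacities: round each long resource capacity $w(i)$ down to the nearest power of $2$, and double $f_{S^*}(i)$ to compensate so that $w(i)f(i)$ is preserved. This costs a factor of $2$ and ensures all long capacities are powers of $2$, which enables a clean level decomposition in the next step.

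The central step is to restructure the long-resource multiset so that, at each timeslot, at most one long resource per ``level'' is active, where the level of resource $i$ is $\lceil \log_2(w(i)f(i)) \rceil$. Where two resources of the same level overlap at a timeslot, I would merge them into a single representative by boosting the multiplicity of one of them so that it subsumes the contribution of the other on the overlap. Since long resources are intervals on a line, I would use a peeling/interval-scheduling argument in the spirit of Bansal et al.~\cite{BansalFKS09}: process levels from highest to lowest, and at each level, use the chain decomposition of the active intervals to absorb same-level companions, paying at most a constant (say $4$) factor in the total cost of that level. This step is the main obstacle, and accounts for the bulk of the $16$.

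Once at most one long resource per level is active at each timeslot $t$, the geometric decay of capacities guarantees that the top-level active resource $i^*(t)$ alone satisfies $w(i^*(t)) f(i^*(t)) \geq D(t)/2$. Thus setting $Q(t) := Q^{\mathrm{sh}}(t) + w(i^*(t))f(i^*(t))$ gives an SLRA cover with measure at least half of $\sum_t Q^*(t) \ge k$. Doubling all long-resource multiplicities one final time (cost factor $2$) restores the measure to at least $k$ while preserving the SLRA property. Multiplying the three factors gives an overall blowup of $2 \cdot 4 \cdot 2 = 16$, yielding the desired SLRA solution and establishing Lemma~\ref{lem:SLRA}.
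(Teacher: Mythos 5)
Your outer structure matches the paper's: keep the short resources of the optimum, pass to the residual profile $D(t)=Q^*(t)-Q^{\sh}_{S^*}(t)$, and then convert the long-resource part into an SLRA cover of $D$ at a factor-$16$ loss. The paper does exactly this, but it obtains the conversion of the long part by directly invoking Theorem~1 of \cite{esa2011} (restated as Lemma~\ref{lem:esa-SLRA}): any multiset of long resources covering a profile can be replaced by a multiset of long resources forming an SLRA cover of that profile, at cost at most $16$ times larger. That citation carries essentially the entire content of the lemma; everything around it is bookkeeping.

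Your proposal, by contrast, tries to reprove that conversion from scratch, and the central step is missing. You define the level of a resource as $\lceil\log_2(w(i)f(i))\rceil$ and propose to ``merge'' two same-level resources that overlap at a timeslot by boosting the multiplicity of one so that it subsumes the other. But boosting $f(i)$ changes $w(i)f(i)$ and hence promotes $i$ to a higher level, where it can collide with resources already present at that level; this cascades, and nothing in your sketch bounds the total cost of the cascade or shows it terminates with only a constant-factor increase. The phrase ``use the chain decomposition \ldots\ paying at most a constant (say $4$) factor'' is precisely the claim that needs a proof, and you acknowledge as much (``this step is the main obstacle''). There is also a smaller arithmetic slip: with levels defined via $\lceil\log_2\rceil$, having at most one resource per level active at a timeslot only guarantees that the top-level resource carries more than a quarter (not half) of the total coverage, so your final restoration step would need a factor $4$, giving $2\cdot 4\cdot 4=32$ rather than $16$. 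As written, the proposal is a plausible plan whose key lemma is asserted rather than proved; you should either carry out the level-merging argument in full (controlling the cascading) or, as the paper does, import the SLRA conversion theorem from \cite{esa2011}.
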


The lemma follows from a similar result proved in \cite{esa2011} and
the proof is deferred to Appendix~\ref{sec:lspcdetails}.
Surprisingly, we can find the {\em optimum} SLRA solution $S^*$ in polynomial time, 
as shown in Theorem~\ref{thm:xGGG} below.
Lemma \ref{lem:SLRA} and Theorem \ref{thm:xGGG} imply that 
$S^*$ is a $16$-factor approximation to the optimum solution.
This completes the proof of Theorem~\ref{thm:xEEE}.

\begin{theorem}
\label{thm:xGGG}
The optimum SLRA solution $S^*$ can be found in time polynomial in the number of resources,
number of timeslots and $H$, where $H=\max_{t\in[1,T]} d_t$.
\end{theorem}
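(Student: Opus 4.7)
The plan is to compute the optimum SLRA solution via a dynamic program that sweeps timeslots from left to right. The DP state is $(t, K, i, x)$, where $t \in [0,T]$ is the current timeslot, $K \in [0,k]$ is the total coverage already committed in $[1,t]$, $i$ is the currently active long resource (with $i=\bot$ meaning none), and $x$ is its committed copy count. The entry $V(t,K,i,x)$ stores the minimum total cost (of shorts used in $[1,t]$ plus the $x$ copies of $i$ plus any long resources already ``closed out'' earlier) to achieve coverage $K$ while ending at $t$ with $i$ as the active primary.

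The transitions, from $(t,K,i,x)$ to $(t{+}1,\,K+k_{t+1},\,i',x')$, fall into three types: (i) \emph{continue} with the same primary, $i'=i$, allowed when $i\sim t{+}1$, possibly raising $x$ to $x'\geq x$ at incremental cost $(x'-x)c(i)$; (ii) \emph{switch} to a different primary $i'$ with $i'\sim t{+}1$, incurring cost $x'c(i')$ upfront (the previous primary $i$ is treated as closed out, its cost already paid); and (iii) \emph{go idle}, $i'=\bot$ and $x'=0$, incurring no new long cost. In each case we may also pick at most one short resource $s$ active at $t{+}1$ at cost $c(s)$, and we choose $k_{t+1}\le\min\{d_{t+1},\,x'w(i')+q^{\sh}_{t+1}\}$. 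The base case is $V(0,0,\bot,0)=0$, and the final answer is $\min_{i,x,\,K\ge k} V(T,K,i,x)$.

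The delicate point is that this DP tracks only a single active long resource at a time, so it is correct only if the optimum can be attained by assignments where every long resource is primary on a contiguous range of timeslots. Hence the key structural lemma I would establish is: \emph{there exists an optimum SLRA solution in which, for each long resource $i$, the set of timeslots where $i$ is primary forms a (possibly empty) contiguous interval.} I would prove this by an exchange argument starting from any optimum solution with primary assignment $\pi$. Suppose $\pi^{-1}(i)$ has a gap at a timeslot $t$ with $\pi(t)=j\neq i$; since $i$ spans an interval and is primary on both sides of $t$, $i$ is active at $t$. If $x_i w(i)\ge r_t$, I simply reassign $\pi(t)\leftarrow i$. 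Otherwise $x_j w(j)>x_i w(i)$, and I show that $j$ (or another already-purchased long resource of sufficient effective capacity) can substitute for $i$ across $i$'s primary region, after which $i$ can be dropped from the solution without increasing cost. Carrying this exchange through uniformly, especially when many long resources of varying capacities overlap and no single resource dominates, is the main obstacle.

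Given the lemma, the DP's correctness is a straightforward induction on $t$: any optimum contiguous-primary SLRA solution corresponds to a valid path through the DP, and conversely every path yields a feasible SLRA solution whose cost equals the accumulated transition weights. The state space has size $O(T\cdot TH\cdot m\cdot H)$ since $K\le TH$ and $x\le \lceil H/\min_i w(i)\rceil\le H$, and each transition enumerates $O(m+H)$ choices of $(i',x')$ together with at most one of the shorts at $t{+}1$, so the whole algorithm runs in time polynomial in $T$, $m$, and $H$, exactly as claimed by the theorem.
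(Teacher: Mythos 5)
Your DP is sound but not complete: every path through it does correspond to a feasible SLRA solution of the stated cost, but the converse fails, so the algorithm can return a strictly suboptimal SLRA solution. The culprit is exactly the structural lemma you flag as ``the main obstacle'' --- it is not merely hard to prove, it is false. Consider $T=3$, demands $d_1=d_3=1$, $d_2=2$, no short resources, $k=4$, a long resource $i$ spanning $[1,3]$ with $w(i)=1$, $c(i)=10$, and a long resource $j$ spanning $[2,2]$ with $w(j)=2$, $c(j)=1$. One copy each of $i$ and $j$ is an SLRA cover of the profile $(1,2,1)$ at cost $11$: $i$ is primary at $t=1,3$ and $j$ is primary at $t=2$. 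Here the primary set of $i$, namely $\{1,3\}$, is forced to be non-contiguous ($j$ is not active at $1$ or $3$, and a single copy of $i$ cannot be primary at $t=2$), and your exchange argument cannot repair this: $j$ has the larger effective capacity but does not span $i$'s primary region, so it cannot ``substitute for $i$.'' Your sweep DP, whose state remembers only the single currently active primary, must either pay for $i$ a second time when it switches back at $t=3$ (cost $21$) or buy two copies of $i$ up front (cost $20$); it cannot realize cost $11$. Since Theorem~\ref{thm:xGGG} requires the \emph{exact} optimum SLRA solution (that exactness is what combines with Lemma~\ref{lem:SLRA} to yield the $16$-approximation), this is fatal as stated.

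The missing idea is a mechanism by which a long resource, once paid for, remains implicitly available as the primary throughout its span without being re-bought. The paper achieves this with an interval-structured DP indexed by triples $([a,b],q,h)$, where $h$ records the effective capacity of an enclosing long resource whose cost has already been accounted for: when the recurrence commits to $\alpha$ copies of a long resource $i^*$, it recurses into $[s(i^*),e(i^*)]$ with threshold $h=\alpha w(i^*)$, so every timeslot in that span with residual demand at most $\alpha w(i^*)$ is covered by $i^*$ for free, whether or not those timeslots are contiguous. The correctness of that recurrence rests on the decomposition lemma (Lemma~\ref{lem:decomp}): either the short resources alone suffice, or there is a time-cut that partitions the resource multiset, or there is a single long resource whose left and right flanks are covered by shorts alone. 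If you wanted to salvage a left-to-right formulation, the state would have to carry the multiset of long resources still ``open,'' which is exponential; some form of the interval/freeness recursion appears unavoidable.
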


The rest of the section is devoted to proving Theorem~\ref{thm:xGGG}.
The algorithm goes via dynamic programming.
The following notation is useful in our discussion.
\begin{itemize}
\item
Let $S$ be a good set consisting of only short resources, and let $[a,b]$ be a range.
For a profile $Q$ defined over $[a,b]$, 
 $S$ is said to be an {\em $h$-free cover} for $Q$, if for any $t\in [a,b]$,
$Q^{\sh}_S(t)\geq Q(t)-h$.
The set $S$ is said to be an {\em $h$-free $q$-cover} for $[a,b]$,
if there exists a profile $Q$ over $[a,b]$  such that $Q$ has measure $q$ and $S$
is a $h$-free cover for $Q$.
\item 
Let $S$ be a good multiset of resources and let $[a,b]$ be a range.
For a profile $Q$ defined over $[a,b]$, 
the multiset $S$ is said to be an {\em $h$-free SLRA cover} for $Q$, 
if for any timeslot $t\in [a,b]$ satisfying $Q(t)-Q^{\sh}_S(t) > h$, 
there exists a long resource $i\in S$ such that $w(i)f_S(i)\geq Q(t)-Q^{\sh}_S(t)$.
For an integer $q$, we say $S$ is an {\em $h$-free SLRA $q$-cover} for the range $[a,b]$,
if there exists a profile $Q$ over $[a,b]$ such that $Q$ has measure $q$ and $S$
is a $h$-free SLRA cover for $Q$.
\end{itemize}
Intuitively, $h$ denotes the demand covered by long resources already selected (and their cost accounted for)
in the previous stages of the algorithm; thus, 
 timeslots whose residual demand is at most $h$ can be ignored.
The notion of ``$h$-freeness'' captures this concept.

We shall first argue that any $h$-free SLRA cover $S$ for a profile $Q$ over a timerange $[a,b]$ exhibits 
certain interesting decomposition property.
Intuitively, in most cases, the timeline can be partitioned into two parts (left and right),
and $S$ can be partitioned into two parts $S_1$ and $S_2$ such that 
$S_1$ can cover the left timerange and $S_2$ can cover the right timerange
(even though resources in $S_1$ may be active in the right timerange and those in $S_2$
may be active in the left timerange).
In the cases where the above decomposition is not possible,
there exists a long resource spanning almost the entire range.
The lemma is similar to a result proved in \cite{esa2011} (see Lemma 4 therein).
The proof is deferred to Appendix~\ref{sec:lspcdetails}.

\begin{lemma}
\label{lem:decomp}
Let $[a,b]$ be any timerange, $Q$ be a profile over $[a,b]$ and let $h$ be an integer.
Let $S$ be a good set of resources providing an $h$-free SLRA-cover for $Q$.
Then, one of the following three cases holds:
\begin{itemize}
\item
The set of short resources in $S$ form a $h$-free cover for Q.
\item 
{\it Time-cut: } There exists a timeslot $a\leq t^*\leq b-1$ and a partitioning of $S$ into $S_1$ and $S_2$
such that $S_1$ is an $h$-free SLRA-cover for $Q_1$ and $S_2$ is an $h$-free SLRA-cover for $Q_2$,
where $Q_1$ and $Q_2$ profiles obtained by restricting $Q$ to $[a,t^*]$ and $[t^*+1,b]$, respectively.
\item
{\it Interval-cut:}
There exists a long resource $i^*\in S$
such that the set of short resources in $S$ forms a $h$-free cover for both $Q_1$ and $Q_2$, where
$Q_1$ and $Q_2$  are the profiles obtained by restricting $Q$ to $[a,s(i^*)-1]$ and  $[e(i^*)+1,b]$ 
respectively.
\end{itemize}
\end{lemma}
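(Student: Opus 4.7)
Call a timeslot $t \in [a,b]$ \emph{critical} if $Q(t) - Q^{\sh}_S(t) > h$, and let $T^* \subseteq [a,b]$ denote the set of critical timeslots. If $T^* = \emptyset$, then $Q^{\sh}_S(t) \geq Q(t)-h$ at every timeslot, so the short resources of $S$ by themselves form an $h$-free cover of $Q$, placing us in the first case of the lemma.

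Assume henceforth $T^* \neq \emptyset$, and set $t_L = \min T^*$, $t_R = \max T^*$. By the SLRA hypothesis, each $t \in T^*$ admits a \emph{witness}: a single long resource $\phi(t) \in S$, active at $t$, with $w(\phi(t))\, f_S(\phi(t)) \geq Q(t) - Q^{\sh}_S(t)$. The first thing I would check is whether some long resource $i^* \in S$ satisfies $s(i^*) \leq t_L$ and $e(i^*) \geq t_R$, i.e. $[s(i^*),e(i^*)] \supseteq T^*$. If so, every timeslot in $[a, s(i^*)-1] \cup [e(i^*)+1, b]$ is non-critical, so the short resources of $S$ alone $h$-free cover the restricted profiles $Q_1$ and $Q_2$ on those two intervals; this establishes the interval-cut case.

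Otherwise no long resource spans $[t_L,t_R]$; in particular $e(\phi(t_L)) < t_R$, so $t^* := e(\phi(t_L))$ lies in $[a,b-1]$. I would attempt the partition that sends each short resource to the side containing its timeslot, sends each long resource $j$ with $e(j) \leq t^*$ into $S_1$, and sends each long resource $j$ with $s(j) > t^*$ into $S_2$. The witness $\phi(t_L)$, whose end time equals $t^*$, lies in $S_1$ and serves the critical timeslots on the left whose witnesses also satisfy $e \leq t^*$. The main obstacle is the treatment of long resources that \emph{straddle} $t^*$ (those with $s(j) \leq t^* < e(j)$): each such resource must be assigned entirely to one side, yet it could be the sole witness for a critical timeslot on the opposite side. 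To overcome this I would (a) exploit the freedom to re-select the witnesses $\phi(t)$ among all valid candidates so as to avoid straddling whenever possible (for instance, by choosing witnesses with minimum span), and (b) argue that if the initial cut at $e(\phi(t_L))$ fails because some critical $t' > t^*$ is served only by straddling witnesses, then iteratively moving $t^*$ rightward along the end-times of successive straddling witnesses must either terminate in a clean partition of $S$ or chain together witnesses into a single long resource whose span contains $[t_L,t_R]$. The latter outcome contradicts the case hypothesis and returns us to the interval-cut branch, so in all scenarios one of the three listed cases holds.
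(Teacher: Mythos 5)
Your reduction to the set of critical timeslots is exactly the paper's first move: define $\calT' = \{t\in[a,b] : Q(t) - Q^{\sh}_S(t) > h\}$, dispose of the case $\calT' = \emptyset$ (first bullet), observe that a single long resource whose span contains all of $\calT'$ yields the interval-cut case, and otherwise seek a time-cut. Up to that point you and the paper agree. The difference is that the paper does not prove the remaining dichotomy itself: it passes the residual profile $\wh{Q}(t) = Q(t) - Q^{\sh}_S(t)$ on $\calT'$ to an imported structural result (Lemma~\ref{lem:esa-timecut}, a reformulation of a result of \cite{esa2011}), which asserts that a pure-long-resource SLRA cover either admits a time-cut with a consistent partition of the resources or contains a single resource spanning all of $\calT'$. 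That imported lemma is precisely the part you set out to establish from scratch, and your argument for it is not complete.

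Concretely, the gap is in your step (b). You correctly identify the obstruction --- a long resource straddling the candidate cut $t^*$ may be the unique admissible witness for a critical timeslot on each side --- but the proposed escape does not close the case analysis. If the rightward iteration along end-times of straddling witnesses never produces a clean cut, what you obtain is a \emph{chain} of pairwise-overlapping witnesses whose union of spans contains $[t_L,t_R]$; that is not a single long resource $i^*$ with $[s(i^*),e(i^*)]\supseteq \calT'$, which is what the interval-cut case requires (the third bullet needs every timeslot outside \emph{one} resource's span to be non-critical). So ``returns us to the interval-cut branch'' is asserted, not proved. Likewise, step (a)'s re-selection of minimum-span witnesses comes with no argument that it eliminates the bad configuration, and there is no progress measure showing the iteration terminates with the left side still fully witnessed: the witness $\phi(t_L)$ of the leftmost critical timeslot need not witness the other critical timeslots in $[t_L, e(\phi(t_L))]$, since their residual demands may exceed $w(\phi(t_L))f_S(\phi(t_L))$, and after $t^*$ moves the assignment of straddlers has to be re-justified. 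To repair the proof you either need to carry out the induction establishing the time-cut/spanning-resource dichotomy (e.g., along the lines of \cite{esa2011}, working from the critical timeslot of maximum residual demand), or simply invoke that result as the paper does.
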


We now discuss our dynamic programming algorithm.
Let $H=\max_{t\in[1,T]} d_t$ be the maximum of  the input demands. 
The algorithm maintains a table $M$ with an entry for each triple $\langle [a,b], q, h\rangle$,
where $[a,b]\subseteq [1,T]$, $0\leq q\leq k$ and $0\leq h\leq H$.
The entry $M([a,b],q,h)$ stores the cost of the optimum $h$-free SLRA $q$-cover for the range $[a,b]$;
if no solution exists, then $M([a,b],q,h)$ will be $\infty$.
Our algorithm outputs the solution corresponding to the entry  $M([1,T],k,0)$; notice that 
this is optimum SLRA solution $S^*$.

In order to compute the table $M$, we need an auxiliary table $A$.
For a triple $[a,b]$, $q$ and $h$, let $A([a,b],q,h)$ be the optimum $h$-free $q$-cover for $[a,b]$,
(using only the short resources); if no solution exists $A([a,b],q,h)$ is said to be $\infty$.
We first describe how to compute the auxiliary table $A$. 
For a triple consisting of $t\in [1,T]$, $q\leq k$ and
$h\leq H$, define $\gamma(t,q,h)$ as follows.
If $q > d_t$, set $\gamma(t,q,h)=\infty$.
Consider the case where $q\leq d_t$.
If $q \leq h$, set $\gamma(t,q,h) = 0$.
Otherwise, let $i$ be the minimum cost short resource active at $t$ such that $w(i) \geq q-h$;
set $\gamma(t,q,h)=c(i)$; if no such short resource exists, set $\gamma(q,t,h)=\infty$.

Then, for a triple $\langle[a,b],q, h\rangle$, the entry $A([a,b],q,h)$ is governed
by the following recurrence relation. Of the demand $q$ that need to be covered,
the optimum solution may cover a demand  $q_1$ from the timeslot $t$, and a demand $q-q_1$ from the  range $[a,b-1]$.
We try all possible values for $q_1$ and choose the best:
\[
A([a,b],q,h) = 
\min_{
  \substack{
     q_1\leq \min\{q, d_b\}
   }
}
A([a,b-1],q -q_1,h) + \gamma(b,q_1,h).
\]
It is not difficult to verify the correctness of the above recurrence relation.

\begin{figure*}
\fbox{
\begin{minipage}{0.95\textwidth}
\begin{eqnarray*}
E_1 &=& A([a,b],q,h).\\
E_2 &=& \min_{\substack{t\in [a,b-1] \\ q_1\leq q}} M([a,t],q_1,h) + M([t+1,b],q-q_1,h).\\
E_3 &=&
\quad
\min_{
\substack{(i\in \calL, \alpha\leq H)~:~\alpha w(i)>h\\
           q_1,q_2,q_3~:~q_1+q_2+q_3 = q
         }
}
\begin{pmatrix}
\alpha \cdot c(i) \\
+ A([a,s(i)-1],q_1,h) \\
+M([s(i),e(i)],q_2,\alpha w(i))\\
+ A([e(i)+1,b],q_3,h) \\
\end{pmatrix}
\end{eqnarray*}
\end{minipage}
}
\caption{Recurrence relation for $M$}
\label{fig:formula}
\end{figure*}

We now describe how to compute the table $M$. 
Based on the decomposition lemma (Lemma \ref{lem:decomp}), we can develop a recurrence relation 
for a triple $[a,b]$, $q$ and $h$. 
We compute $M([a,b],q,h)$ as the minimum over three quantities $E_1$, $E_2$ and $E_3$ corresponding to the three cases
of the lemma. Intuitive description of the three quantities is given below and precise formulas are provided
in Figure \ref{fig:formula}. In the figure, $\calL$ is the set of all long resources\footnote{The input demands
$d_t$ are used in computing the table $A(\cdot,\cdot,\cdot)$}.
\begin{itemize}
\item 
{\it Case 1: } 
No long resource is used and so, we just use the 
corresponding entry of the table $A$. 
\item
{\it Case 2: } There exists a time-cut $t^*$. We consider all possible values
of $t^*$. For each possible value, we try all possible ways in which $q$ can be divided between the left and right ranges.
\item
{\it Case 3: }
There exists a long resource $i^*$ such that the timeranges to the left of and to the right of
$i^*$ can be covered solely by short resources.
We consider all the long resources $i$ and also the number of copies $\alpha$ to be picked.
Once $\alpha$ copies of $i$ are picked, $i$ can cover all timeslots with residual demand at most
$\alpha w(i)$ in an SLRA fashion, and so the subsequent recursive calls can ignore these timeslots.
Hence, this value is passed to the recursive call.
We also consider different ways in which $q$ can be split into three parts - left, middle and right.
The left and right parts will be covered by the solely short resources and the middle part will use both 
short and long resources.
Since we pick $\alpha$ copies of $i$, a cost of $\alpha c(i)$ is added.
\end{itemize}
We set $M([a,b],q,h)=\min\{E_1, E_2, E_3\}$. For the base case: for any $[a,b]$, if $q=0$ or $h = H$,
then the entry is set to zero.

We now describe the order in which the entries of the table are filled.
Define a partial order $\prec$ as below.
For pair of triples $z=([a,b],q,h)$ and $z'=([a',b'],q',h')$,
we say that $z\prec z'$, if one of the following properties is true:
(i)$[a',b']\subseteq [a,b]$; 
(ii) $[a,b]=[a',b']$ and $q<q'$; 
(iii) $[a,b]=[a',b']$, $q=q'$ and $h>h'$.
Construct a directed acyclic graph (DAG) $G$ where the triples
are the vertices and an  edge is drawn from a triple $z$ to a triple $z'$,
if $z\prec z'$. Let $\pi$ be a topological ordering of the vertices in $G$.
We fill the entries of the table $M$ in the order of appearance in $\pi$.
Notice that the computation for any triple $z$ only refers to triples appearing
earlier than $z$ in $\pi$.

Using Lemma \ref{lem:decomp}, we can argue that the above recurrence relation correctly computes
all the entries of $M$. 
For the sake of completeness, a proof is included in Appendix~\ref{sec:recur-proof}.

\section{The {\PCResAll} problem}
\label{app:pcresall}
In this section, we consider the {\PCResAll} problem. We prove the following:

\begin{theorem}
\label{DDD}
There is a $4$-factor approximation algorithm for the {\PCResAll} problem. 
\end{theorem}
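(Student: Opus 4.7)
My plan is to reduce the {\PCResAll} problem to a zero-one variant of the {\ResAll} problem (call it {\ZeroOneResAll}), in which a specified subset of resources may each be picked at most once while the remaining resources retain their multi-copy flexibility. Given a {\PCResAll} instance $(\cJ,\calI,\{p_j\}_{j\in\cJ})$, I construct a {\ZeroOneResAll} instance as follows. The ordinary resources are exactly those of $\calI$. In addition, for every job $j\in\cJ$, I introduce a \emph{penalty resource} $v_j$ whose interval is $I(j)$, whose capacity is $1$, whose cost is $p_j$, and which is constrained to be chosen zero or one times. The demand profile that this full-cover instance must satisfy is the entire profile $P_{\cJ}$ of the input job set.

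The correctness of the reduction rests on a cost-preserving correspondence between feasible solutions. Given a {\PCResAll} solution $(R,J)$, I claim that $R\cup\{v_j : j\notin J\}$ is a feasible solution to the {\ZeroOneResAll} instance of identical cost. This is because each picked $v_j$ contributes exactly one unit of capacity throughout $I(j)$, which precisely cancels job $j$'s unit demand contribution to $P_{\cJ}$ at every slot of its span; the residual demand is exactly $P_J$, which $R$ covers by assumption. Conversely, any feasible {\ZeroOneResAll} solution $S$ in which the selected penalty resources are $\{v_j : j\in U\}$ yields the {\PCResAll} solution $(S\setminus\{v_j\}_{j\in U},\, \cJ\setminus U)$ of equal cost, since removing the penalty resources removes exactly the capacity that absolved the dropped jobs. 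Hence the two optima coincide, and any $\rho$-factor approximation for {\ZeroOneResAll} immediately yields a $\rho$-factor approximation for {\PCResAll}.

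The principal technical content then lies in establishing a $4$-factor approximation for {\ZeroOneResAll}. My plan is to adapt the algorithm of Bar-Noy et al.~\cite{Bar-Noy}, whose $4$-approximation for the standard full-cover {\ResAll} problem is LP-based and interval-structural. The natural LP for {\ZeroOneResAll} augments theirs with the upper bounds $x_{v_j}\leq 1$ on the penalty-resource variables; I would argue that their rounding procedure respects these caps (either directly, or after a simple preprocessing that snaps $x_{v_j}^*$ to $\{0,1\}$), so that the $4$-factor guarantee transfers to this richer setting.

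The main obstacle I anticipate is handling the $0$-$1$ cap on the penalty resources without losing a multiplicative factor on top of Bar-Noy et al.'s $4$. A naive threshold rounding of $x_{v_j}^*$ combined with a scale-up of the ordinary-resource variables to absorb the residual demand gives only a composite factor (say $8$), so achieving $4$ requires a more careful argument that leverages the specific structure of the Bar-Noy et al.\ algorithm on interval instances, e.g.\ by showing that their rounding can be executed directly on the extended LP without an intermediate scale-up step. This is the crux of the technical work; once it is in place, combining the reduction above with this adapted $4$-approximation concludes the proof of Theorem~\ref{DDD}.
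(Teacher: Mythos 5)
Your reduction is essentially the paper's: the paper also creates, for each job $j$, a pick-at-most-once resource with interval $I(j)$ and cost $p_j$, sets the demand profile to $P_{\cJ}$, and establishes the cost-preserving correspondence in both directions (it calls the resulting problem {\smfc}, with S-type and M-type resources). Your forward and reverse arguments are sound; indeed your reverse direction, which simply subtracts the unit capacity of each selected penalty resource from $P_{\cJ}$, is if anything cleaner than the paper's ``standard form'' exchange argument.

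The genuine gap is in the second half. You explicitly leave unproven the claim that the zero-one variant of {\ResAll} admits a $4$-approximation, calling it ``the crux of the technical work'' and only sketching a plan to adapt the LP rounding of Bar-Noy et al.\ while acknowledging that the naive adaptation loses an extra factor. Without that piece the theorem is not established. The paper does not reprove or adapt anything here: it observes that {\smfc} reduces to the {\ZeroOneResAll} problem already studied in \cite{esa2011} (replicate each M-type resource polynomially many times so that multi-copy selection is simulated by picking distinct copies), and then invokes the known $4$-factor approximation for {\ZeroOneResAll} from that paper as a black box. So the missing ingredient in your write-up is exactly the citation-level fact that the at-most-once full-cover problem is already solved to within a factor of $4$; you should either invoke that result or actually carry out the LP argument you sketch, since as written the hardest step of the proof is deferred rather than done.
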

The proof proceeds by exhibiting a reduction from the {\PCResAll} problem to 
the following full cover problem.

\noindent
{\it  Problem Definition:} We are given a demand profile which specifies an integral demand $d_t$ at each timeslot $t$.
The input resources are of two types, called S-type (short for single) and M-type (short for multiple).  
A resource $i$ has 
a capacity $w(i)$, and cost $c(i)$. A valid solution consists of a multiset of resources such that
it includes at most $1$ copy of any S-type resource; however arbitrarily many copies of any M-type resource may be picked. A feasible solution $S$ is a valid solution such that at any timeslot $t$, the total 
capacity of the resources in $S$, active at the timeslot $t$, is at least the demand $d_t$ of the timeslot.
The objective is to find a feasible solution having minimum cost. 

Call this problem the Single Multiple Full Cover ({\smfc}) problem.

The full cover problem, {\ZeroOneResAll} is considered in \cite{esa2011}. 
The {\ZeroOneResAll} problem specifies demands for 
timeslots, and feasible solutions consist of a set of resources such that every timeslot is 
fulfilled by the cumulative capacity of the resources active at that timeslot. The main 
qualification is that in this problem setting, any resource may be picked up {\em at most
once}. In \cite{esa2011},  it is shown that this problem admits a $4$-factor
approximation algorithm. The {\smfc} problem easily reduces to the {\ZeroOneResAll} problem:
{\em S-type} resources may be picked up at most once, and keep copies of the {\em M-type}
resources so that it suffices to select any one of the copies. 
Thus the algorithm and the performance guarantee claimed in \cite{esa2011} also implies the following:
\begin{theorem}\label{EEE}
There is a $4$-factor approximation to the {\smfc} problem. 
\end{theorem}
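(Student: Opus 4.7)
The plan is to give a cost- and feasibility-preserving reduction from {\smfc} to {\ZeroOneResAll}, and then apply the $4$-factor approximation of \cite{esa2011} as a black box. Because the reduction preserves optimal cost exactly, the approximation ratio carries over unchanged.

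Given an {\smfc} instance with demand profile $\{d_t\}$, S-type resources $\calI_S$, and M-type resources $\calI_M$, let $H=\max_t d_t$. I build a {\ZeroOneResAll} instance with the same demand profile and a resource set $\calI'$ constructed as follows. Each S-type resource $i\in \calI_S$ is placed into $\calI'$ as a single resource with the same interval, capacity $w(i)$, and cost $c(i)$; this matches the ``at most one copy'' constraint of S-type resources to the ``pick each resource at most once'' constraint of {\ZeroOneResAll}. For each M-type resource $i\in \calI_M$, I instead create $N_i=\lceil H/w(i)\rceil$ identical copies in $\calI'$, each with the same interval, capacity, and cost. The size of $\calI'$ is polynomial in the input whenever $H$ is polynomially bounded, which is the regime we care about (in particular when {\smfc} arises from the {\PCResAll} reduction, where $H$ is bounded by the number of jobs).

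For correctness, any feasible {\smfc} solution $R$ translates to a feasible {\ZeroOneResAll} solution $R'$ of the same cost: for each S-type resource in $R$, include its unique copy in $R'$; for each M-type resource $i$ taken with multiplicity $\alpha_i$ in $R$, include $\alpha_i$ of the $N_i$ copies of $i$ in $R'$. Without loss of generality $\alpha_i\le N_i$, because every timeslot spanned by $i$ has demand at most $H$, so more than $\lceil H/w(i)\rceil$ copies of $i$ are never helpful. Conversely, any feasible {\ZeroOneResAll} solution $R'$ translates back to a feasible {\smfc} solution $R$ of the same cost: include each selected S-type resource, and for each M-type resource $i$ set its multiplicity equal to the number of copies of $i$ chosen in $R'$. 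The capacity profile active at each timeslot is identical under both directions of the translation, so feasibility is preserved and costs match.

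This bijection shows the two instances have the same optimum value. Running the $4$-approximation of \cite{esa2011} (i.e.\ the algorithm for {\ZeroOneResAll}) on the constructed instance and translating the output back yields a feasible {\smfc} solution of cost at most $4\cdot \opt$, which is exactly the claim. The only non-routine step is the bound $N_i=\lceil H/w(i)\rceil$ on the number of copies needed; this is the sole ingredient that keeps the reduction polynomial, and it rests on the simple observation that capacity exceeding $H$ at any active timeslot is wasted. No other obstacles arise.
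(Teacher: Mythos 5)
Your proposal is correct and follows essentially the same route as the paper: the paper also proves Theorem~\ref{EEE} by reducing {\smfc} to the {\ZeroOneResAll} problem of \cite{esa2011}, keeping each S-type resource as a single copy and duplicating each M-type resource enough times that picking distinct copies simulates multiplicities. Your write-up merely makes explicit the copy bound $\lceil H/w(i)\rceil$ and the resulting cost-preserving correspondence, which the paper leaves implicit.
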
 

We proceed to exhibit our reduction from the {\PCResAll} problem to the {\smfc} problem. 
Given an instance $\I$ of the {\PCResAll} problem, we will construct an instance $\O$ of the 
{\smfc} problem, such that any optimal solution $\opt(\I)$ can be converted (at no extra
cost) into an optimal solution $\opt(\O)$ for the instance $\O$. Consider any job $j$
in the instance $\I$; we will create a S-type resource $r(j)$ in the instance $\O$ corresponding to
$j$. The resource $r(j)$ will have the same length, start- and end-times as the job $j$, and will 
have a cost $p_j$ (the penalty associated with job $j$). The resources in instance $\I$ 
will be labeled as M-type resources in the instance $\O$. The other parameters, such as 
demands at timeslots, are inherited by $\O$ from the instance $\I$. 

We show that 
any feasible solution $S_\I$ to the {\PCResAll} problem corresponds to a feasible solution $S_\O$ (of the same cost) for the {\smfc} problem. Let ${\cal J}'$ denote the set of jobs that are not covered by the 
solution $S_\I$ (thus, the solution pays the penalty for each of the jobs in $\cJ'$).

The multiset of resources in $S_\O$ consists of the (M-type) resources that exist in the solution $S_\I$, and 
the S-type resources $r(j)$ in $\O$ corresponding to every job $j$ in $\cJ'$. 
Any job $j$ that is actually covered by the set of resources in $S_\I$ is also covered
in the solution $S_\O$, and the resources utilized to cover the job are the same. A job $j$ that is 
not covered by the resources in $S_\I$ pays a penalty $p_j$ in the solution $S_\I$; however this 
job $j$ in $\O$ can be covered by the S-type resource $r(j)$ in the solution 
$S_\O$. Thus, the solution $S_\O$ is a feasible solution to the instance $\O$, and has cost 
equal to the cost of the solution $S_\I$. 

In the reverse direction, suppose we are given a solution $S_\O$ to the instance $\O$. 
We will convert the solution into a {\em standard} form, i.e. a solution in which if a 
S-type resource $r(j)$ (for some job $j$) is included, then this resource is used to 
cover job $j$. Suppose job $j$ is covered by some other resources in the solution 
$S_\O$, while resource $r(j)$ covers some other jobs (call this set $J'$). We can clearly {\em exchange}
the resources between job $j$ and the set of jobs $J'$ so that job $j$ is covered by
resource $r(j)$. So we may assume that the solution $S_\O$ is in standard form. 
But now, given a standard form solution $S_\O$, we can easily construct a 
feasible solution $S_\I$ for the {\PCResAll} instance $\I$: if a job $j$ in $S_\O$ 
is covered by the S-type resource $r(j)$, then in $S_\I$, this job will not be 
covered (and a penalty $p_j$ will be accrued); all jobs $j$ in $S_\O$ that are 
covered by M-type resources will be covered by the corresponding resources
in $S_\I$.

This completes the reduction, and the proof of Theorem~\ref{EEE}.

\bibliographystyle{plain}
\bibliography{papers}

\begin{thebibliography}{10}

\bibitem{BansalFKS09}
N.~Bansal, Z.~Friggstad, R.~Khandekar, and M.~Salavatipour.
\newblock A logarithmic approximation for unsplittable flow on line graphs.
\newblock In {\em SODA}, 2009.

\bibitem{Bar-Noy}
A.~Bar-Noy, R.~Bar-Yehuda, A.~Freund, J.~Naor, and B.~Schieber.
\newblock A unified approach to approximating resource allocation and
  scheduling.
\newblock {\em Journal of the ACM}, 48(5):1069--1090, 2001.

\bibitem{Bar01}
R.~Bar-Yehuda.
\newblock Using homogeneous weights for approximating the partial cover
  problem.
\newblock {\em J. Algorithms}, 39(2):137--144, 2001.

\bibitem{bhatia07}
R.~Bhatia, J.~Chuzhoy, A.~Freund, and J.~Naor.
\newblock Algorithmic aspects of bandwidth trading.
\newblock {\em ACM Transactions on Algorithms}, 3(1), 2007.

\bibitem{esa2011}
V.~Chakaravarthy, A.~Kumar, S.~Roy, and Y.~Sabharwal.
\newblock Resource allocation for covering time varying demands.
\newblock In {\em ESA}, 2011.

\bibitem{cgk10}
D.~Chakrabarty, E.~Grant, and J.~K{\"o}nemann.
\newblock On column-restricted and priority covering integer programs.
\newblock In {\em IPCO}, pages 355--368, 2010.

\bibitem{GKS04}
R.~Gandhi, S.~Khuller, and A.~Srinivasan.
\newblock Approximation algorithms for partial covering problems.
\newblock {\em J. Algorithms}, 53(1):55--84, 2004.

\bibitem{Garg05}
N.~Garg.
\newblock Saving an $\epsilon$: a 2-approximation for the k-{MST} problem in
  graphs.
\newblock In {\em STOC}, 2005.

\bibitem{JV01}
K.~Jain and V.~Vazirani.
\newblock Approximation algorithms for metric facility location and {\it
  k}-median problems using the primal-dual schema and {L}agrangian relaxation.
\newblock {\em J. ACM}, 48(2):274--296, 2001.

\bibitem{KPS11}
J.~K{\"o}nemann, O.~Parekh, and D.~Segev.
\newblock A unified approach to approximating partial covering problems.
\newblock {\em Algorithmica}, 59(4), 2011.

\end{thebibliography}

\appendix
\section{Proof of Lemma \ref{lem:XXX}}
\label{sec:DDD}
We first categorize the jobs according to their  lengths into $r$ categories $C_1$, $C_2$, $\cdots, C_r$, where 
$r = \lceil \log \frac{\ell_{\max}}{\ell_{\min}}\rceil$.
The category $C_i$ consists of all the jobs with lengths in the range $[2^{i-1}{\ell_{\min}}, 2^i{\ell_{\min}})$.
Thus all the jobs in any single category have comparable lengths: 
any two jobs $j_1$ and $j_2$ in the category satisfy $\ell_{1} < 2\ell_{2}$, where 
$\ell_1$ and $\ell_2$ are the lengths of $j_1$ and $j_2$ respectively.

Consider any category $C$ and let the lengths of the jobs in $C$ lie in the range $[\alpha, 2\alpha)$.
We claim that the category $C$ can be partitioned into $4$ groups $G_0, G_1, G_2, G_3$, such that 
each $G_i$ is a mountain range. 
To see this, 
partition the set of jobs $C$ into classes $H_1, H_2, \ldots, H_q, \ldots$ where $H_q$ consists of the jobs
active at timeslot $q \cdot \alpha$.
Note that every job belongs to some class since all the jobs have
length at least $\alpha$; if a job belongs to more than one class, assign it to any one class arbitrarily.
Clearly each class $H_q$ forms a mountain. 
For $0 \le i \le 3$, let $G_i$ be the union of the classes $H_q$ satisfying $q \equiv i \mod 4$.
Since each job has length at most $2\alpha$, each $G_i$ is a mountain range.
Thus, we get a decomposition of the input jobs into $4r$ mountain ranges.
\qed

\section{Single Mountain Range: Proof of Theorem~\ref{thm:xCCC}}
\label{app:red}
In this section, we prove Theorem~\ref{thm:xCCC} via a reduction to {\lspc}. 
The reduction proceeds in two steps. 

\subsection{First Step}
Let the input instance be $\cA$, wherein the input set of jobs form a mountain range $\cM = \{M_1, M_2, \cdots, M_r\}$. We will transform the instance $\cA$ to an instance $\cB$, with some nice properties:
(1) the input set of jobs in $\cB$ also form a mountain range;
(2) every resource $i$ in the instance $\cB$ is either narrow or wide (see Section~\ref{sec:overview} for the definitions);
(3) the cost of the optimum solution for the instance $\cB$ is at most $3$ times the optimal cost for
the instance $\cA$;
(4) given a feasible solution to $\cB$, we can construct a feasible solution to $\cA$ preserving the cost. 

Consider each resource $i$ in $\cA$ and let $M_p, M_{p+1}, \cdots, M_q$ (where $1 \leq p \leq q \leq r$) be the sequence of mountains that $i$ intersects. Clearly, $i$ fully spans the mountains $M_{p+1}, \cdots, M_{q-1}$. 
We will split the resource $i$ into at most $3$ new resources $i_1, i_2, i_3$; we say that $i_1$, $i_2$ and $i_3$ 
are {\em associated with} $i$.
The resource $i_2$ will fully span the mountains $M_{p+1}, \cdots, M_{q-1}$.
The span of the resource $i_1$ is the intersection of the span of  $i$ with the mountain $M_p$. Likewise, the span of the resource $i_3$ is the intersection of the span of $i$ with the 
mountain $M_q$.
The capacities and the costs of $i_1$, $i_2$ and $i_3$ are declared to be the same as that 
of $i$. We include $i_1, i_2, i_3$ in $\cB$. 
The input set of jobs and the partiality parameter $k$, in $\cB$ are identical to that of $\cA$. 
This completes the reduction.

It is easy to see that the first two properties are satisfied by $\cB$.
Let us now consider third property .
Given any solution $S$ for the instance $\cA$, we can construct a solution $S'$ for $\cB$ as follows. 
For each copy of resource $i$ picked in $S$, include a single copy of $i_1$, $i_2$ and $i_3$ in $S'$. 
Clearly, the cost of the solution $S'$ is at most thrice that of the cost of $S$.
Regarding the fourth property, given a solution $S$ to $\cB$, we can construct a solution $S'$ to $\cA$
as follows. Consider any resource $i$ in $\cA$ and let $i_1$, $i_2$ and $i_3$ be the resources in $\cB$
associated with $i$. Let $f_1, f_2, f_3$ be the number of copies of $i_1,i_2,i_3$ picked by solution $S$.
Let $f=\max\{f_1,f_2,f_3\}$. Include $f$ copies of the resource $i$ in the solution $S'$.
It is easy to see that $S'$ is a feasible solution to $\cA$ and that the cost of $S'$ is
at most the cost of $S$.

\subsection{Second Step}
In this step we reduce the problem instance $\cB$ to an {\lspc} instance $\cC$, with the following properties:
(1) the cost of the optimum solution for the instance $\cC$ is at most $8$ times the optimal cost for
the instance $\cA$; 
(2) Given a feasible solution to $\cC$, we can construct a feasible solution to $\cB$ preserving the cost. 

\subsubsection*{Reduction}
In the instance $\cC$,  retain only the peak timeslots of the various mountains in the instance $\cB$
so that the number of timeslots in $\cC$ is the same as the number of mountains $r$ in $\cB$. 
For any  peak timeslot $t$ in the instance $\cB$, let $d_t$ be the number of jobs in $\cB$ that are
active at the timeslot $t$; we assign the demand $d_t$ to timeslot $t$ in the instance $\cC$. 
 For any wide resource $i$ in $\cB$, fully spanning mountains $M_p, M_{p+1}, \cdots, M_q$, 
create a long resource $i'$ in $\cC$ with the span $[p,q]$. The cost and capacity of $i'$ are
the same as that of $i$. 

The narrow resources in the instance $\cB$ are used to construct the short resources in the instance $\cC$ 
as follows.
Consider any specific mountain $M$ in the instance $\cB$ along with the collection of narrow resources $R$ that are 
contained in the span of $M$, and let $t$ be the peak timeslot of $M$. For any integer $\kappa$ ($1 \leq \kappa \leq d_t$), 
we will apply the algorithm implied in Theorem~\ref{thm:xDDD} for the single mountain $M$, with $\kappa$ as the 
partiality parameter, and the set of narrow resources $R$ as the only resources. 
Then, Theorem~\ref{thm:xDDD} gives us a solution of cost $C$ consisting of a multiset $R'$ of some resources in $R$, 
that covers $\kappa$ of the jobs in the mountain $M$. Corresponding to each $\kappa$, we will include a short 
resource $i_s$ in the instance $\cC$ with capacity $\kappa$, and cost $C$.  
We will call the (multi)set of narrow resources $R' \subseteq R$ in the instance $\cB$ 
as {\em associated} with the short resource $i_s$.  This completes the description of the instance $\cC$ 
of the {\lspc} problem.

\subsubsection*{Validity of the reduction}
We will now argue the validity of the reduction. 
Let us consider the first property:
the cost of the optimum solution to the instance $\cC$ has cost at most $8$ times
the cost of the optimum solution to the instance $\cB$.
The following lemma is useful for this purpose.

\begin{lemma}
\label{lem:LLL}
Let $J$ be a subset of jobs and $R$ be multiset of resources in the instance $\cB$
such that $R$ covers $J$ (note that $R$ contains only narrow or wide resources and $J$ forms a mountain range).
Let $R_1$ and $R_2$ be narrow and wide resources in $R$. Let $R_2'$ be a multiset 
constructed by picking twice the number of copies of each resource in $R_2$.
Then, $J$ can be partitioned into two sets $J_1$ and $J_2$ such that $J_1$ is solely covered by the resources in $R_1$
and $J_2$ is solely covered by the resources in $R_2'$.
\end{lemma}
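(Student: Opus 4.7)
The plan is to build the partition one mountain at a time. Since $J$ forms a mountain range, decompose it into its mountains, whose spans are pairwise disjoint. For each such mountain $M$, let $R_1^M$ be the narrow resources of $R_1$ whose intervals lie in $M$'s span, let $R_2^M$ be the wide resources of $R_2$ that fully span $M$, and let $W_M = \sum_{i \in R_2^M} w(i)$. Because narrow resources by definition lie inside a single mountain's span and wide resources either fully span or miss each mountain, at any timeslot $t$ in $M$'s span the $R$-resources active at $t$ split as the subset of $R_1^M$ active at $t$ together with all of $R_2^M$. The covering hypothesis therefore specialises to $n(t) + W_M \geq p(t)$ at every $t$ in $M$'s span, where $n(t) = \sum_{i \in R_1^M,\, i \sim t} w(i)$ and $p(t) = |\{j \in J \cap M : j \sim t\}|$.

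The partition inside each mountain $M$ mimics Lemma~\ref{BBB}: place into $D_2^M$ the $W_M$ jobs of $J \cap M$ with earliest start times together with the $W_M$ jobs of $J \cap M$ with latest end times (or all of $J \cap M$ if $|J \cap M| \leq 2 W_M$), and place the rest into $D_1^M$. The global partition is $J_1 = \bigcup_M D_1^M$ and $J_2 = \bigcup_M D_2^M$. Because mountain spans are pairwise disjoint, the two coverage statements reduce to a per-mountain check.

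For $J_2$ versus $R_2'$: by construction $|D_2^M| \leq 2W_M$, and every job of $J \cap M$ is active at the peak $t^*(M)$, so at every $t$ in $M$'s span one has $|D_2^M \cap \{j \sim t\}| \leq |D_2^M| \leq 2W_M$, which matches the $R_2'$ capacity at $t$. For $J_1$ versus $R_1$, fix a timeslot $t \leq t^*(M)$ in $M$'s span. Since every job of $J \cap M$ passes through $t^*(M)$, the jobs of $J \cap M$ active at $t$ are exactly those with start time at most $t$, i.e.\ a prefix in the start-time order of $J \cap M$. Hence the $\min(W_M, p(t))$ earliest-start jobs are all active at $t$ and all sit in $D_2^M$, giving
\[
|D_1^M \cap \{j \sim t\}| \leq p(t) - \min(W_M, p(t)) = \max(0, p(t) - W_M) \leq n(t),
\]
where the last inequality uses $n(t) + W_M \geq p(t)$. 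A symmetric argument using the latest-end jobs and end-time order handles $t \geq t^*(M)$.

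The main subtlety, and the reason the factor of two is needed, is that at the peak $t^*(M)$ every job of $J \cap M$ is simultaneously active, so the earliest-start and latest-end blocks inside $D_2^M$ stack and demand up to $2W_M$ units of wide capacity at that single timeslot. Doubling $R_2$ is precisely what makes this fit. Any attempt to avoid doubling would require a non-static partition of $J$ that depends on $t$, which the lemma does not allow.
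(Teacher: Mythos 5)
Your proof is correct and follows essentially the same route as the paper's: per mountain, you set aside the $W_M$ leftmost and $W_M$ rightmost jobs for the doubled wide resources (the paper's $J_L\cup J_R$ with $h=W_M$) and cover the remainder with the narrow resources via the capacity inequality $n(t)+W_M\geq p(t)$. Your prefix/suffix argument for why at least $\min(W_M,p(t))$ of the removed jobs are active at each $t$ is a slightly more explicit justification of the step the paper states in one line, but the decomposition and the bounds are identical.
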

\begin{proof}
For now, we assume that the mountain range comprises of a single mountain.

Let $P_R(t), P_{R_1}(t), P_{R_2}(t), P_{R_2'}(t)$ denote the profile of the resources in $R, R_1, R_2$ and $R_2'$ respectively.
Note that $P_{R_2}(t)$ is a uniform bandwidth profile having uniform height, say $h$.
Let $J_L$ be the first $h$ jobs among all the jobs in $J$ sorted in ascending ordered by their start-times. 
Similarly, 
let $J_R$ be the first $h$ jobs among all the jobs in $J$ sorted in descending order by their left end-times.
Intuitively, $J_L$ and $J_R$ correspond to the $h$ left-most and the $h$ right-most jobs in the mountain.

Let $J_L$ and $J_R$ denote the $h$ left-most and $h$ right-most jobs in the job profile $J$ respectively
(these sets may not be disjoint). Let $J_2 = J_L \cup J_R$ and $J_1 = J \setminus J_2$.
Let $P_J(t)$, $P_{J_1}(t)$ and $P_{J_2}(t)$ denote the profiles of the jobs in $J$, $J_1$ and $J_2$ respectively.

Note that the profile $P_{R_2'}(t)$ has height $2h$ throughout the span of the mountain whereas the profile 
$P_{J_2}(t)$ has height at most $2h$ at any timeslot. Thus $R_2'$ covers $J_2$.

We will now show that $R_1$ covers $J_1$.
Note that $P_{J_1}(t)=P_J(t)-P_{J_2}(t)$.
We partition the timeslots into two parts: $T_0=\{t:P_{J_1}(t)=0 \}$ and $T_{>0}=\{t:P_{J_1}(t)>0 \}$.
For the timeslots in $T_0$, there are no jobs remaining in $J_1$ for $R_1$ to cover. 
For the timeslots in $T_{>0}$, we note that $P_{J_1}(t) \le P_J(t)-h$
(because $J_2$ comprises of the left-most $h$ and right-most $h$ jobs of the mountain).
Also note that the profile $P_{R_1}(t) = P_R(t)-P_{R_2}(t) = P_R(t)-h$.
Since, $R$ covers $J$, this implies that $R_1$ is sufficient to cover $J_1$.

The proof can easily be extended to a mountain range as the mountains within a mountain range are disjoint.
\end{proof}

Let $\opt=(R,J)$ denote the optimal solution for the problem instance $\cB$,
where $J$ is the set of jobs picked by the solution and $R$ is the set of resources covering $J$ (we have $|J|=k$).
Let $R_1$ and $R_2$ be the set of narrow and wide resources in $R$.
Apply Lemma \ref{lem:LLL} for the solution $(R,J)$ and obtain a partition of $J$ into $J_1$ and $J_2$
along with $R_1$ (covering $J_1$) and $R_2'$ (covering $J_2$).
Let $\calM = M_1, M_2, \ldots, M_r$ be the input mountain range in the instance $\cB$ with peak timeslots
$t_1, t_2, \ldots, t_r$, respectively. Consider any mountain $M_q$.
Let $k_q$ be the number of jobs picked in $J$ from the mountain $M_q$.
Let $R_{1,q}$ be the set of (narrow) resources from $R_1$ contained within the span of $M_q$.
Thus, the set of resources $R_{1,q}$ cover the set of jobs in $M_q\cap J_1$ and let $k_q' = |M_q\cap J_1|$.
Corresponding to the value $k_q'$, we would have included a short resource, say $i_q$
in the instance $\cC$; cost of $i_q$ is at most $8$ times the cost of $R_{1,q}$ 
(as guaranteed by Theorem \ref{thm:xDDD}).
The set of long resources in $R_2'$ cover at least $k_q-k_q'$ jobs within the mountain $M_q$.

Construct a solution to the instance $\cC$ by including $i_1, i_2, \ldots, i_q$; 
and for each copy of a wide resource $i$ in $R_2'$,
include a copy of its corresponding long resource. Notice that this is a feasible solution to the instance $\cC$.
The cost of the short resources $\{i_1, i_2, \ldots, i_q\}$ is at most $8$ times the cost of $R_1$
and the cost of the long resources is the same as that of $R_2'$, which is at most twice that of $R_2$.
Cost of $\opt$ is the sum of costs of $R_1$ and $R_2$.
Hence, cost of the constructed solution is at most $8$ times the cost of $\opt$.

We now prove the second property:
Let $S$ be a given a solution to the instance $\cC$ of the {\lspc} problem of cost $c$;
the solution also provides a coverage profile, $k_t$ for each timeslot $t$ (such that $\sum_t k_t = k$).
We produce a feasible solution $S'=(R',J')$ to the instance $\cB$ with the same cost $c$. 
For each long resource picked by $S$, we retain the corresponding
wide resource in $R'$ (maintaining the number of copies).
Consider any timeslot $t$ in the {\lspc} instance and let $M$ be the corresponding mountain in the instance $\cB$.
The solution $S$ contains at most one short resource $i_s$ active at $t$ of capacity $k_t'=w(i_s)$.
Consider the multiset of short resources $R'$ in the instance $\cB$ associated with the resource $i_s$.
The multiset $R'$ covers a set of $k_t'$ jobs contained in the mountain $M$.
Include all these $k_t'$ jobs in $J'$. Choose any other $k_t-k_t'$ jobs contained in $M$
and add these to $J'$; notice that the wide resources retained in $R'$ can cover these jobs.
This way we get a solution $S'$ for the instance $\cB$.
Cost of the solution $S'$ is at most the cost of $S$.
\\

\noindent
{\it Proof of Theorem \ref{thm:xCCC}: }
By composing the reductions given the two steps,
we get a reduction from the {\PResAll} problem on a single mountain range to the {\lspc} problem. 
The first step and the second step incur a loss in approximation of $3$ and $8$, respectively.
Thereby, the combined reduction incurs a loss of $24$.
Theorem \ref{thm:xEEE} provides a $16$-approximation algorithm for the {\lspc} problem.
Combining the reduction and the above algorithm, we get an algorithm for the {\PResAll} for a single mountain
range with an approximation ratio of $16\times 24 = 384$.

\section{Details for {\lspc} Algorithm}
\label{sec:lspcdetails}\
In this section, we present proofs and other details omitted in the main body of the paper.

\subsection{Proof of Lemma \ref{lem:SLRA}}
The following lemma is a reformulation of Theorem 1 {in} \cite{esa2011}.
For a multiset of resources $S$, let $c(S)$ denote its cost.
\begin{lemma}
\label{lem:esa-SLRA}
Let $\wh{S}$ be a multiset of long resources covering a profile $\wh{Q}$ over a timerange $[1,T]$.
Then, there exists a multiset of long resources $S'$ such that $S'$ is a SLRA cover for $Q$
and $c(S')\leq 16\cdot c(\wh{S})$. 
\end{lemma}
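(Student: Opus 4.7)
The plan is to derive Lemma \ref{lem:esa-SLRA} by direct reduction to Theorem 1 of \cite{esa2011}, since (as noted in the paper) our statement is essentially a reformulation of that result. Consequently, the bulk of the work is translating between the two problem formulations rather than establishing any new combinatorial content.

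First, I would recall the precise content of Theorem 1 in \cite{esa2011}. In that paper, the setting is phrased in terms of satisfying a finite collection of demand timeslots with a multiset of long resources, and the theorem guarantees that any feasible cover $\wh{S}$ can be replaced by a SLRA-style multiset $S'$, in which for every satisfied timeslot a single long resource from $S'$ contributes through its multiplicity alone enough capacity to meet the demand, with $c(S') \le 16 \cdot c(\wh{S})$.

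Next, I would establish the equivalence between profiles on $[1,T]$ and such collections of demand timeslots: each $t \in [1,T]$ with $\wh{Q}(t) > 0$ is regarded as a demand point of value $\wh{Q}(t)$, and timeslots with zero demand impose no constraint and may be ignored. Under this correspondence, the hypothesis that $\wh{S}$ covers $\wh{Q}$ is exactly the feasibility hypothesis of \cite{esa2011}. Invoking Theorem 1 of \cite{esa2011} then produces the desired multiset $S'$ of long resources, of total cost at most $16 \cdot c(\wh{S})$, providing a single-resource cover at every such $t$.

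Finally, I would verify that the $S'$ so obtained fits our definition of a SLRA cover as given in Section \ref{sec:lspc}. Since $S'$ is made up entirely of long resources, $Q^{\sh}_{S'}(t) = 0$ for all $t$, so the SLRA requirement $w(i)\,f_{S'}(i) \ge \wh{Q}(t) - Q^{\sh}_{S'}(t)$ collapses to $w(i)\,f_{S'}(i) \ge \wh{Q}(t)$, which is exactly what the cited theorem delivers. The main (and essentially only) obstacle is bookkeeping: confirming that the formalisms align, that the integrality conventions for capacities and demands used in \cite{esa2011} are consistent with those in the present paper, and that multiplicities of resources are handled the same way in both sources. Once these alignment checks are made, the lemma follows with no further argument.
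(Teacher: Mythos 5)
Your proposal matches the paper exactly: the paper offers no independent proof of this lemma, stating only that it is a reformulation of Theorem~1 of \cite{esa2011}, and your translation between profiles and demand timeslots (together with the observation that $Q^{\sh}_{S'}(t)=0$ when $S'$ contains only long resources) is precisely the bookkeeping that reformulation requires. No gap.
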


Let $\opt$ be the optimum solution and let $Q$ be the profile of measure $k$ covered by $\opt$.
Let $\opt_l$ and $\opt_s$ be the multiset of long and short resources contained in $\opt$, respectively.
Define $Q_l$ to be the residual profile over $[1,T]$: $Q_l(t)=Q(t)- Q^{\sh}_S(t)$.
The multiset $\opt_l$ covers the profile $Q_l$. 
Invoke Lemma \ref{lem:esa-SLRA} on $\opt_l$ and $Q_l$ (taking $\wh{S}=\opt_l$ and $\wh{Q}=Q_l$)
and obtain a multiset of long resources $S'$ which forms a SLRA cover for $Q_l$. 
Construct a new multiset $S$, by taking the union of $S'$ and $\opt_s$. 
Notice that $S$ is a SLRA solution.  The cost  of $S'$ is at most 16 times the cost of $\opt_l$.
So, $S$ has cost at most 16 times the cost of $\opt$.

\subsection{Proof of Lemma \ref{lem:decomp}}
We first extend the notion of SLRA covers to subsets of timeslots.
Let $\calT\subseteq [1,T]$ be a set of timeslots and let $\wh{Q}$ be a profile over the set $\calT$.
A good multiset of resources $S$ is said to be a SLRA cover for $\calT$, 
if for any timeslot $t\in \calT$, there exists a long resource $i\in S$ such that 
$w(i)f_{S}(i)\geq Q(t)-Q^{\sh}_{S}(t)$.
The following lemma is a reformulation of a result in \cite{esa2011} (see Section 2.2 therein).

\begin{lemma}
\label{lem:esa-timecut}
Let $\wh{S}$ be a multiset consisting of only long resources. 
Let $\wh{Q}$ be a profile over a non-empty set of timeslots $\calT'\subseteq [a,b]$,
for some $a$ and $b$. 
Suppose $\wh{S}$ is a SLRA cover for $\wh{Q}$. Then one of the following properties is true:
\begin{itemize}
\item
There exists a timeslot $t^*\in [a,b-1]$ and a partition of $\wh{S}$ into $\wh{S}_1$ and $\wh{S}_2$ such that
$\wh{S}_1$ is a SLRA cover for $\wh{Q}_1$ and $\wh{S}_2$ is a SLRA cover for $\wh{Q}_2$,
where $\wh{Q}_1$ and $\wh{Q}_2$ are the profiles obtained by restricting $\wh{Q}$
to the timeslots in $\calT'\cap [a,t^*]$ and $\calT'\cap [t^*+1,b]$, respectively.
\item
There exists a resource $i^*\in \wh{S}$ spanning all timeslots in $\calT'$.
\end{itemize}
\end{lemma}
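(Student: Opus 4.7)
The plan is to proceed by induction on the cardinality of $\calT'$. For the base case $|\calT'|=1$, the SLRA hypothesis guarantees some resource $i^{*}\in\wh{S}$ that is active at the sole timeslot of $\calT'$; its span then contains $\calT'$, so case~2 holds immediately.

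For the inductive step, assume the result for strictly smaller $\calT'$ and write $\calT' = \{t_1 < t_2 < \cdots < t_p\}$ with $p \geq 2$. The first step is to check whether some $i \in \wh{S}$ satisfies $[s(i), e(i)] \supseteq \calT'$; if so we are in case~2. Otherwise no resource spans $\calT'$. I would then single out a distinguished \emph{leftmost witness} $i^{*}$: among all $i \in \wh{S}$ that are active at $t_1$ and satisfy $w(i)\,f_{\wh{S}}(i) \geq \wh{Q}(t_1)$, pick one that maximises $e(i^{*})$. Since $s(i^{*}) \leq t_1$ and $i^{*}$ does not span $\calT'$, we must have $e(i^{*}) < t_p$. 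The natural cut is $t^{*} = e(i^{*})$, and I would try the partition $\wh{S}_2 = \{i \in \wh{S} : e(i) > e(i^{*})\}$ and $\wh{S}_1 = \wh{S}\setminus \wh{S}_2$. The right side is easy: any SLRA-witness of a timeslot $t > e(i^{*})$ in $\calT'$ must have $e(i) \geq t > e(i^{*})$, so it lies in $\wh{S}_2$, and hence $\wh{S}_2$ is an SLRA cover of $\wh{Q}$ restricted to $\calT' \cap [t^{*}+1, b]$.

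The main obstacle is proving the analogous statement on the left: the witness assigned by the hypothesis to some $t \in \calT'_L := \calT' \cap [a, e(i^{*})]$ might have $e > e(i^{*})$ and thus land in $\wh{S}_2$. To close the left side I would exploit the extremal choice of $i^{*}$: any alternative witness $i'$ for such a $t$ with $e(i') > e(i^{*})$ and $s(i') \leq t_1$ would itself be a valid witness of $t_1$ of larger $e$ (as soon as its capacity suffices for $\wh{Q}(t_1)$), contradicting the choice of $i^{*}$. Hence either $s(i') > t_1$ (so $i'$ starts strictly to the right of $t_1$, constraining the structure) or $\wh{Q}(t_1)$ strictly exceeds $w(i')\,f_{\wh{S}}(i')$ (so weaker witnesses suffice at $t$ than at $t_1$). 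Using this structural observation I would recurse on the strictly smaller problem obtained by restricting the profile to $\calT'_L$ together with a suitable sub-multiset of $\wh{S}$; the inductive hypothesis then produces either a resource spanning $\calT'_L$ (which either spans all of $\calT'$, giving case~2, or supplies a cut earlier than $e(i^{*})$), or an internal cut that composes with $t^{*}$ to yield the desired partition of $\wh{S}$.

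The step I expect to be the hardest is finishing this left-side analysis cleanly, because the SLRA condition is per-timeslot (one resource must supply the entire residual demand) and so the admissible witness assignments are fairly rigid. I anticipate needing an uncrossing/re-assignment argument: whenever a witness of some $t \in \calT'_L$ is forced out of $\wh{S}_1$, one either swaps in $i^{*}$ (or another resource in $\wh{S}_1$) without violating the capacity bound, or is pushed into a configuration in which a single resource is active at both $t_1$ and $t_p$, re-entering case~2. This bookkeeping of spans and capacities is the technical heart of the proof; the rest reduces to routine case analysis on the outcome of the recursive call.
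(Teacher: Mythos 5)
First, note that the paper does not actually prove this lemma: it is stated as ``a reformulation of a result in \cite{esa2011} (see Section 2.2 therein)'' and imported wholesale, so there is no in-paper argument to compare yours against. Your attempt therefore has to stand entirely on its own, and as written it does not. The parts that work: the base case, the right-hand side of the cut (any witness of a timeslot $t>e(i^*)$ indeed ends after $e(i^*)$ and so lands in $\wh{S}_2$), and the dichotomy for a problematic left witness $i'$ (either $s(i')>t_1$ or $w(i')f_{\wh{S}}(i')<\wh{Q}(t_1)$, the latter case being cleanly absorbed because $i^*$ is active at $t$ and has capacity at least $\wh{Q}(t_1)>\wh{Q}(t)$). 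The gap is exactly where you predict it: the branch $s(i')>t_1$ is the generic case, and the ``recurse on $\calT'_L$ and compose'' step is not an argument, because the output of the recursive call does not visibly produce a valid cut of the \emph{original} instance.

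A concrete instance shows the cut point $t^*=e(i^*)$ is simply wrong and must be moved, which your composition step does not explain how to do. Take $\calT'=\{1,2,3\}$ with $\wh{Q}=(1,10,1)$, one copy of $i_1$ spanning $[1,2]$ with capacity $1$, and one copy of $i_2$ spanning $[2,3]$ with capacity $10$. The only witness of $t_1=1$ is $i_1$, so $i^*=i_1$ and $t^*=e(i^*)=2$; then $\wh{S}_1=\{i_1\}$ must SLRA-cover $\{1,2\}$, which is impossible since timeslot $2$ needs $i_2\in\wh{S}_2$. The correct cut is $t^*=1$. Your recursion on $\calT'_L=\{1,2\}$ may legitimately return its case~2 (the resource $i_1$ spans $\{1,2\}$ but not $\calT'$), and you give no rule turning that answer into the cut at $1$; nor do you prove that when it returns an internal cut, the right piece of the recursive partition together with $\wh{S}\setminus\wh{S}_2$ covers the right piece of the original profile. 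What is actually needed in the $s(i')>t_1$ branch is to move the cut leftward (e.g.\ to $s(i')-1$, or to iterate/induct on a quantity that provably decreases), re-verify both sides at the new cut, and handle the timeslots that migrate from left to right; none of this bookkeeping is present. Until that is done, the proof is a plausible plan with its central step missing, not a proof.
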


We now prove Lemma \ref{lem:decomp}. 
Consider a good multiset of resources $S$ forming a $h$-free SLRA cover for a profile $Q$ over a range $[a,b]$.
Define the set of timeslots $\calT'$:
\[
\calT' = \{t\in [a,b]~:~Q(t)-Q^{\sh}_S(t)>h\}.
\]
If $\calT'$ is empty, then $S$ is a $h$-free cover for $Q$; 
this corresponds to the first case of Lemma \ref{lem:decomp}.
So, assume $\calT'\neq \emptyset$.
Define a profile $\wh{Q}$ over the timeslots in $\calT'$: for any $t\in \calT'$,
let $\wh{Q}(t) = Q(t)-Q^{\sh}_S(t)$.
Notice that $S$ is a SLRA cover for the profile $\wh{Q}$. 
Invoke Lemma \ref{lem:esa-timecut} (with $\wh{S}=S$)
Let us analyze the two cases of the above lemma. 
Consider the first case in Lemma~\ref{lem:esa-timecut}. 
In this case, there exists a timeslot $t^*$ and a partitioning of $S$ into $S_1$ and $S_2$,
with the stated properties.
In this case, we see that $S_1$ and $S_2$ are $h$-free SLRA covers
for $[a,t^*]$ and $[t^*+1,b]$, respectively. This corresponds to the second case of Lemma \ref{lem:decomp}.
Consider the second case in Lemma~\ref{lem:esa-timecut}. In this case, there exists a long resource $i^*\in S$ such that $i^*$ spans
all the timeslots in $\calT'$. This means that any $t\in [a,s(i^*)-1]$ or $t\in [e(i^*)+1,b]$,
$Q(t)-Q^{\sh}_S(t) \leq h$. This corresponds to the third case of Lemma \ref{lem:decomp}.
\qed

\subsection{Correctness of the Recurrence Relation (Figure \ref{fig:formula})}
\label{sec:recur-proof}
We prove the lemma by induction on the position in which a triple appears in the topological ordering $\pi$.
The base case corresponds to triples that do not have a parent in $G$. The lemma is trivially true in this case.

Consider any triple $z=([a,b],q,h)$. Let $S$ be the optimum $h$-free SLRA $q$-cover for $[a,b]$.
There exists a profile $Q$ over $[a,b]$ such that $Q$ has measure $q$ and $S$ is a $h$-free SLRA cover
for $Q$. Let us invoke Lemma \ref{lem:decomp} and consider its three cases.

Suppose the first case of the lemma is true.
Let $S_s$ be the set of short resources contained in $S$. 
Then, $S_s$ is a $h$-free cover for $Q$. Therefore $E_1=A([a,b],q,h)\leq c(S_s) \leq c(S)$.

Suppose the second case of the lemma is true.
Let $t^*$ be the timeslot and $S_1$ and $S_2$ be the partition given by the lemma.
Let $Q_1$ and $Q_2$ be the profiles obtained by restricting $Q$ to the timeranges $[a,t^*]$ and $[t^*+1,b]$,
respectively. Let the measures of $Q_1$ and $Q_2$ be $q_1$ and $q_2$, respectively.
Then $S_1$ is a $h$-free $q_1$-cover for $[a,t^*]$ and $S_2$ is a $h$-free $q_2$-cover for $[t^*+1,b]$.
Therefore, by induction, $M([a,t^*],q_1,h)\leq c(S_1)$ 
and $M([t^*+1,b],q_2,h)\leq c(S_2)$.
In computing the quantity $E_2$, we try all possible ways of partitioning the range $[a,b]$ and dividing the number $q$.
Hence, $E_2\leq c(S_1)+c(S_2)$. Since $c(S)=c(S_1)+c(S_2)$, we see that $E_2\leq c(S)$.

Suppose the third case of lemma is true.
Let $i^*$ be the long resource given by the lemma.
Let $S_1$ be the set of short resources contained in $S$ that are active at a timeslot $t\in[a,s(i^*)-1]$.
Similarly, let $S_3$ be the set of short resources contained in $S$ that are active at a timeslot $t\in[e(i^*)+1,b]$.
Let $S_2$ be the multiset of long resources contained in $S$ and the set of short resources contained in $S$
that are active at a timeslot $t\in [a,b]$.
Let $Q_1$, $Q_2$ and $Q_3$ be the profiles obtained by restricting $Q$ to the ranges $[a,s(i^*)-1]$,
$[s(i^*),e(i^*)]$ and $[e(i^*)+1,b]$, respectively.
The lemma guarantees that $S_1$ and $S_2$ are $h$-free covers for $Q_1$ and $Q_3$ respectively.
Let $q_1$, $q_2$ and $q_3$ be the measures of $Q_1$, $Q_2$ and $Q_3$, respectively.
We see that $A([a,s(i^*)+1],q_1,h)\leq c(S_1)$ and $A([e(i^*)+1,b],q_3,h)\leq c(S_3)$.
Let $\alpha^*=f_S(i^*)$ be the number of copies of $i^*$ present in $S$.
Notice that if $\alpha^* w(i^*)\leq h$, then $i^*$ is not a useful resource,
because $i^*$ will be covering only timeslots in $[s(i^*),e(i^*)]$ with residual demands at most $h$;
but all such timeslots are free and need not be covered.
So, without loss of generality, assume that $\alpha^* w(i^*)>h$.
Since $i^*$ spans the entire range $[s(i^*),e(i^*)]$, 
the resource $i^*$ can cover all timeslots in the above range with residual demands at most $\alpha^* w(i^*)$.
Let $S_2'=S_2-\{i^*\}$. Notice that $S_2'$ is a $(\alpha^* w(i))$-free SLRA cover for the profile $Q_2$.
Therefore, $S_2'$ is a $(\alpha^* w(i))$-free $q_2$-cover for the range $[s(i^*),e(i^*)]$.
Hence, by induction, $M([s(i^*),e(i^*)],q_2,\alpha^* w(i^*)) \leq c(S_2')$.
Therefore, $E_3 \leq c(S_1)+c(S_2)+c(S_3)=c(S)$.

The quantity $E= \min \{E_1, E_2, E_3\}$; so 
$E\leq c(S)$. The proof is now complete.
\qed
\end{document}